\newtheorem{theorem}{Theorem}
\newtheorem{lemma}{Lemma}
\newtheorem{remark}{Remark}
\newtheorem{definition}{Definition}
\newtheorem{assumption}{Assumption}
\newtheorem{fact}{Fact}
\def\BibTeX{{\rm B\kern-.05em{\sc i\kern-.025em b}\kern-.08em
    T\kern-.1667em\lower.7ex\hbox{E}\kern-.125emX}}
\begin{document}
\title{Adaptive Optimal Control of Linear Periodic Systems: An Off-Policy Value Iteration Approach}
\author{Bo Pang, \IEEEmembership{Student Member, IEEE}, and Zhong-Ping Jiang, \IEEEmembership{Fellow, IEEE}
\thanks{This work was partially supported by the National Science Foundation under Grants ECCS-1501044 and EPCN-1903781.}
\thanks{The authors are with the Control and Networks Lab, Department of Electrical and Computer Engineering, Tandon School of Engineering, New York University, 370 Jay Street, Brooklyn, NY 11201, USA (e-mail: bo.pang@nyu.edu; zjiang@nyu.edu).}
}
\renewcommand{\arraystretch}{1.2}

\maketitle

\begin{abstract}
This paper studies the infinite-horizon adaptive optimal control of continuous-time linear periodic (CTLP) systems. A novel value iteration (VI) based off-policy ADP algorithm is proposed for a general class of CTLP systems, so that approximate optimal solutions can be obtained directly from the collected data, without the exact knowledge of system dynamics. Under mild conditions, the proofs on uniform convergence of the proposed algorithm to the optimal solutions are given for both the model-based and model-free cases. The VI-based ADP algorithm is able to find suboptimal controllers without assuming the knowledge of an initial stabilizing controller. Application to the optimal control of a triple inverted pendulum subjected to a periodically varying load demonstrates the feasibility and effectiveness of the proposed method.
\end{abstract}

\begin{IEEEkeywords}
Adaptive dynamic programming, linear periodic systems, optimal control, value iteration.
\end{IEEEkeywords}

\section{Introduction}
Recently, reinforcement learning (RL) has invoked a lot of research interests from both researchers in academia and practitioners in industry, due to its successful applications to the design of intelligent computer GO player and many other intelligent agents learning tasks \cite{sutton2018reinforcement}. In RL, agents (or controllers) find optimal decisions (controls) from scratch through its interactions with an unknown environment. In spite of its popularity, most previous RL algorithms have their own limitations. Firstly, the underlying environments are described by Markov decision processes \cite{sutton2018reinforcement}, where time is discrete and the state and input spaces are finite or countable. Secondly, stability and safety properties associated with the use of the obtained optimal policy are not considered and guaranteed. However, many physical systems are more naturally described by differential equations, where time is continuous and the state and input spaces are infinite and continuous. Stability and safety are also indispensable considerations in real-world applications, e.g., autonomous vehicles. To this end, over the past decade, another stream of RL algorithms has emerged, to solve the optimal control problems described by differential equations, without the exact knowledge of the system dynamics, and with stability guarantees; see, e.g., \cite{jiang2017robust}, \cite{6315769} and numerous references therein. This class of RL algorithms are often coined adaptive dynamic programming (ADP), to be distinguished from those with Markov decision processes. The interested reader can consult the books \cite{jiang2017robust}, \cite{Frank-book1} for several practical applications of ADP.

While significant progresses have been made in the development of ADP, most of the existing results are devoted exclusively to time-invariant systems. When problems arise from applications involving time-varying control systems, those ADP algorithms previously developed for time-invariant systems are not directly applicable. Recently, in \cite{FONG201849} and \cite{BoCTT}, the finite-horizon optimal control problem was studied for time-varying systems by ADP. However, the corresponding infinite-horizon optimal control problem for time-varying systems described by differential equations has received scanty attention. There are several technical obstacles for this generalization. First, the stability analysis and control synthesis of time-varying systems are much more challenging than the case of time-invariant systems. Second, predicting the future evolution of the system trajectories becomes an intractable task for general time-varying systems, using only the historical data collected over a finite period of time, which is a key to the development of ADP algorithms. With these observations in mind, how to develop ADP algorithms to address the infinite-horizon optimal control problem of uncertain time-varying systems with guaranteed stability remains an open problem. In this paper, we take a step forward to study this longstanding unresolved issue. To this end, we will examine the infinite-horizon adaptive optimal control of continuous-time linear periodic (CTLP) systems. The analysis and control of linear period systems have played an important role in various applications. By exploiting the periodic time-varying nature, vibration is significantly actively suppressed in wind turbine system \cite{houtzager2013wind} and rotor-blade system \cite{camino2019periodic}; through the design of periodic model predictive control strategies for periodic systems, better economic performance is achieved in building climate control \cite{gondhalekar2013least}, drinking water network \cite{limon2014single} and nonisolated microgrid \cite{pereira2015periodic}; periodic feedback controller is reported to outperform a standard time-invariant feedback controller in online advertising \cite{karlsson2018control}; to name a few. Orbital stabilization of time-invariant nonlinear systems can also be analyzed and designed using linear periodic systems, since linearization of nonlinear systems along a periodic orbit yields linear periodic systems \cite[Section 5.1]{farkas2013periodic}. It should be emphasized that even for the class of CTLP systems, the design of ADP algorithm is a non-trivial task, as a result of the nonlinear dependence of system parameters on the time. 

Inspired by the time-invariant results in \cite{7798777}, a novel value iteration (VI) based ADP algorithm is proposed for a class of CTLP systems in this paper, to find approximate optimal controllers without the exact knowledge of system dynamics and an initial stabilizing controller. The VI-based ADP is based on the asymptotic property of finite-horizon solution of the periodic Riccati equation (PRE). It is claimed in \cite{doi:10.1137/0313077} that the solution of the PRE starting from a positive semidefinite initial matrix converges to the stabilizing solution of the same PRE, under certain conditions. However, it is pointed out by the authors of \cite{doi:10.1137/0316003} and \cite{Bittanti1991} that the proof of the claim in \cite{doi:10.1137/0313077} is based on some wrong preliminary results (see Remark \ref{hewer_wrong}). In the present paper, we firstly give a new proof of the claim, and then present a VI-based ADP algorithm, using the Fourier basis approximation. It turns out that the VI-based ADP algorithm amounts to numerically solving the final value problem of a nonlinear differential equation, which only involves collected data and is independent of the exact system dynamics. In Section \ref{section_VI}, the uniform convergence of the VI-based ADP algorithm to the optimal solution of the corresponding optimal control problem is rigorously proved. In Section \ref{section_Sim}, the proposed VI-based ADP algorithm is applied to the adaptive optimal control of a triple inverted pendulum subjected to a periodically varying load, which demonstrates the effectiveness of the resulting algorithm. Section V closes the paper with some concluding remarks.

It is worth noting that there is a rich literature on optimal control (see, e.g., \cite{houtzager2013wind,camino2019periodic, gondhalekar2013least, limon2014single,pereira2015periodic}) and on adaptive control (see, e.g., \cite{4782000,1284721,narendra2019adaptive}) for linear periodic systems. However, they have been studied as two separate problems. That is, the optimal control solutions presented in \cite{houtzager2013wind,camino2019periodic, gondhalekar2013least, limon2014single,pereira2015periodic} require the precise knowledge of the system dynamics, while the adaptive control results presented in \cite{narendra2019adaptive,1284721,4782000} do not guarantee optimization of any prescribed cost function. Different from both groups of research, our proposed method finds the suboptimal solutions directly from the input/state data, without the exact knowledge of the system dynamics.

\textbf{Notations}: $\mathbb{R}$ ($\mathbb{R}_+$) is the set of (nonnegative) real numbers. $\mathbb{Z}_+$ is the set of nonnegative integers. $\mathbb{S}^{n}$ denotes the vector space of all $n$-by-$n$ real symmetric matrices. $\otimes$ is the Kronecker product operator. $\vert \cdot \vert$ and $\Vert\cdot\Vert$ represent the Euclidean norm for vectors and the Frobenius norm for matrices, respectively. $[v]_{j}$ denotes the $j$th element of vector $v\in\mathbb{R}^{n}$. $[X]_{i,j}$ denotes the element in $i$th row and $j$th column of matrix $X\in\mathbb{R}^{m\times n}$. $\lfloor v\rfloor$ represents the largest integer no larger than $v\in \mathbb{R}$. $X^\dagger$ denotes the Moore-Penrose inverse of matrix $X$. $\sigma_{\min}(X)$ is the minimal singular value of matrix $X$.

\section{Problem Formulation and Preliminaries}
Consider continuous-time linear periodic systems
\begin{equation}\label{periodSYS}
\dot{x}(t) = A(t)x(t)+B(t)u(t),
\end{equation}
where $x(t)\in\mathbb{R}^n$ is the system state, $u(t)\in\mathbb{R}^m$ is the control input, $A(\cdot):\mathbb{R}\rightarrow \mathbb{R}^{n\times n}$, $B(\cdot):\mathbb{R}\rightarrow \mathbb{R}^{n\times m}$ are continuous and $T$-periodic matrix-valued functions, i.e.,
$$A(t+T)=A(t),\quad B(t+T)=B(t),\quad T\in\mathbb{R}_+,\quad \forall t\in \mathbb{R}.$$
Let $\Phi(t,t_0)$, $t>t_0$, $t_0\in\mathbb{R}$ denote the state transition matrix of the unforced system of (\ref{periodSYS}), with $u=0$. In the setting of linear periodic system, the matrix $\Phi(t_0+T,t_0)$ is known as the monodromy matrix. Its eigenvalues (also called characteristic multipliers) are independent of $t_0$. $A(\cdot)$ is asymptotically stable if and only if its characteristic multipliers are inside the open unit disk. See \cite{10.1007/BFb0043803} for the details.

The infinite-horizon periodic linear quadratic (PLQ) optimal control problem \cite[Section 6.5.1.1]{Bittanti1991} is to find a linear stabilizing control law
$u(t)=-K(t)x(t),$
where $K(\cdot):\mathbb{R}\rightarrow\mathbb{R}^{m\times n}$ is continuous and $T$-periodic, such that the following quadratic cost is minimized
\begin{equation}\label{cost_inf}
    J(t_0,\xi,u(\cdot)) =  \int_{t_0}^\infty{\left(\left\vert C(t)x(t)\right\vert^2+ u^T(t)R(t)u(t)\right)}dt,
\end{equation}
where $C(\cdot): \mathbb{R}\rightarrow\mathbb{R}^{r\times n}$ is continuous and $T$-periodic; $R(\cdot): \mathbb{R}\rightarrow\mathbb{R}^{m\times m}$ is continuous, $T$-periodic, positive definite and piecewise continuously differentiable; $x(t)$ is the solution of (\ref{periodSYS}) with initial state $x(t_0)=\xi$, $\xi\in \mathbb{R}^n$. Associated with the PLQ control problem is the PRE
\begin{equation}\label{PRE}
\begin{split}
    -\dot{P}(t) &= A^T(t)P(t)+P(t)A(t) \\
    &-P(t)B(t)R^{-1}(t)B^T(t)P(t)+C^T(t)C(t).
\end{split}
\end{equation}
Generally, the PRE (\ref{PRE}) may admit many different kinds of solutions, among which two particular kinds  are relevant to this paper.
\begin{definition}[{\cite{Bittanti1991,doi:10.1080/00207179208934305}}]\label{PRE_solutions_definition}
Consider the real symmetric, periodic and positive semidefinite (SPPS) solutions satisfying PRE (\ref{PRE}) over time interval $(-\infty,\infty)$.
\begin{enumerate}
\item $P_S(\cdot)$ is called strong solution, if the characteristic multipliers of $D_S(t)=A(t)-B(t)R^{-1}(t)B^T(t)P_S(t)$ belong to the closed unit disk.
\item $P_+(\cdot)$ is called stabilizing solution, if the characteristic multipliers of $D_+(t)=A(t)-B(t)R^{-1}(t)B^T(t)P_+(t)$ belong to the open unit disk.
\end{enumerate}
\end{definition}
\begin{assumption}\label{structure_assum}
$(A(\cdot),B(\cdot))$ is stabilizable and $(A(\cdot),C(\cdot))$ is detectable \cite[Theorem 4]{10.1007/BFb0043803}.
\end{assumption}
Under Assumption \ref{structure_assum}, the optimal solution to the infinite-horizon PLQ control problem exists and is unique \cite[Theorem 6.5 and 6.12]{Bittanti1991}.
\begin{lemma}\label{UniqueExist}
There exists a unique SPPS solution $P^*(\cdot)$ of the PRE, and the corresponding closed-loop system is stable, if and only if Assumption \ref{structure_assum} is satisfied. In addition, \begin{enumerate*}[label=\alph*)]
\item $P_S=P_+=P^*$. \item the cost (\ref{cost_inf}) is minimized by the optimal controller $u^* (t)= -K^*(t)x(t)$, with $K^*(t)=R^{-1}(t)B^T(t)P^*(t)$. \item the corresponding minimum cost is $J^*(t_0,\xi)=J(t_0,\xi,u^*(\cdot))=\xi^TP^*(t_0)\xi$. 
\end{enumerate*}  
\end{lemma}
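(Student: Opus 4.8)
The plan is to prove the biconditional in two directions, resting the entire argument on two pillars: the completing-the-squares identity attached to the PRE~(\ref{PRE}), and the monotone convergence of the finite-horizon periodic Riccati solution. For the sufficiency direction I would assume Assumption~\ref{structure_assum} and build $P^*(\cdot)$ explicitly as a limit, then read off claims a)--c); for the necessity direction I would recover stabilizability and detectability from the asserted existence, uniqueness, and closed-loop stability.

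\textbf{Sufficiency.} Stabilizability of $(A(\cdot),B(\cdot))$ supplies a continuous $T$-periodic feedback $\bar{K}(\cdot)$ whose closed-loop monodromy matrix has all eigenvalues in the open unit disk, so that $J(t_0,\xi,-\bar{K}x)<\infty$ for every $\xi$; hence the optimal cost is finite and nonnegative. I would then introduce the finite-horizon solution $P(\cdot\,;\tau)$ of~(\ref{PRE}) with terminal condition $P(\tau;\tau)=0$. A comparison argument shows that, for each fixed $t$, $P(t;\tau)$ is monotonically nondecreasing in $\tau$ and bounded above by the $\bar{K}$-cost, so the pointwise limit $P^*(t)=\lim_{\tau\to\infty}P(t;\tau)$ exists, is symmetric positive semidefinite, and solves~(\ref{PRE}); $T$-periodicity of the data then forces $P^*(\cdot)$ to be $T$-periodic. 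Differentiating $x^{T}(t)P^*(t)x(t)$ along~(\ref{periodSYS}) and substituting~(\ref{PRE}) gives, after integrating on $[t_0,\tau]$,
\begin{equation*}
\int_{t_0}^{\tau}\!\big(|Cx|^2+u^{T}Ru\big)\,dt = \xi^{T}P^*(t_0)\xi - x^{T}(\tau)P^*(\tau)x(\tau) + I_\tau,
\end{equation*}
where $I_\tau=\int_{t_0}^{\tau}(u+R^{-1}B^{T}P^*x)^{T}R(u+R^{-1}B^{T}P^*x)\,dt\ge 0$. Once the optimal closed loop is shown to be stable (so $x(\tau)\to 0$), letting $\tau\to\infty$ identifies the pointwise minimizer $u^*(t)=-R^{-1}(t)B^{T}(t)P^*(t)x(t)$ and the minimum cost $J^*(t_0,\xi)=\xi^{T}P^*(t_0)\xi$, establishing b) and c).

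\textbf{Stability, uniqueness, and equivalence.} Along any trajectory of the closed loop generated by a symmetric, periodic, positive semidefinite (SPPS) solution $P$, the PRE yields $\tfrac{d}{dt}(x^{T}Px)=-\big(|Cx|^2+x^{T}PBR^{-1}B^{T}Px\big)\le 0$, so $x^{T}Px$ is nonincreasing and both $Cx$ and the feedback term lie in $L^2$. Writing $\dot{x}=(A-LC)x+LCx+Bu$ with $L$ chosen so that $A-LC$ is stable (possible by detectability), an $L^2$-input/stable-system argument forces $x\to 0$; hence every SPPS solution is in fact stabilizing, i.e.\ $D=A-BR^{-1}B^{T}P$ has its characteristic multipliers in the open unit disk. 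This simultaneously upgrades the limit $P^*$ to the stabilizing solution (giving the stability needed above) and shows $P_S=P_+$ in Definition~\ref{PRE_solutions_definition}. For uniqueness, given two SPPS solutions $P_1,P_2$, the difference $\Delta=P_1-P_2$ obeys the linear periodic Lyapunov equation $-\dot{\Delta}=D_1^{T}\Delta+\Delta D_2$ with $D_i=A-BR^{-1}B^{T}P_i$ both stable; since $\Delta$ is $T$-periodic, evaluating over one period produces a discrete Stein equation whose only solution is $\Delta\equiv0$, so $P_S=P_+=P^*$ is unique, completing a).

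\textbf{Necessity and the main obstacle.} Conversely, if a unique SPPS solution $P^*$ with stable closed loop exists, then $K^*=R^{-1}B^{T}P^*$ is a stabilizing $T$-periodic feedback, giving stabilizability; detectability follows from uniqueness, because an undetectable unstable mode would allow a second positive semidefinite solution built on the detectable subsystem, contradicting uniqueness. The main obstacle I anticipate is the stability/uniqueness step rather than the existence step: upgrading the strong (closed-unit-disk) solution to the stabilizing (open-unit-disk) one through detectability, and ruling out all other positive semidefinite solutions. In the periodic setting these comparison, Lyapunov, and output-injection arguments must be carried out with state-transition and monodromy matrices in place of matrix exponentials, where monotonicity and the $L^2$-to-stability implication are more delicate to justify; I would lean on the periodic Lyapunov/detectability machinery of \cite{Bittanti1991,10.1007/BFb0043803} to close these gaps.
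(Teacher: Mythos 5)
The paper offers no proof of this lemma to compare against: Lemma~\ref{UniqueExist} is imported as a known result, cited to \cite[Theorems 6.5 and 6.12]{Bittanti1991} (see also the Callier--Willems-type periodic Riccati results in \cite{doi:10.1080/00207179208934305}). Your proposal reconstructs the standard proof given in that literature, and the route you choose---monotone convergence of the finite-horizon solution $P(t;\tau,0)$ for existence, completing the square for optimality, detectability plus an output-injection/$L^2$ argument to show every SPPS solution is stabilizing, and a periodic Lyapunov/Stein equation forcing the difference of two SPPS solutions to vanish for uniqueness---is the classical one. It is worth noting that your existence step (monotonicity in $\tau$, upper bound from a stabilizing feedback, periodicity from $P(t+T;\tau+T,0)=P(t;\tau,0)$, monotone convergence) is exactly the device the paper itself uses later, in the Appendix proof of Theorem~\ref{VI}, Case~2; but there the paper invokes Lemma~\ref{UniqueExist} to identify the limit with $P^*$, whereas you prove uniqueness independently via the Stein argument, so your construction is not circular and in fact subsumes that part of the paper's Theorem~\ref{VI} proof.

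Two gaps deserve mention. The substantive one is in the necessity of detectability: your contradiction mechanism (an undetectable unstable mode yields a second positive semidefinite solution vanishing on the unobservable part) works when the undetectable characteristic multiplier lies strictly outside the unit circle, but fails when it lies \emph{on} the unit circle. The time-invariant example $A=0$, $B=1$, $C=0$ shows why: $P=0$ is then the \emph{only} positive semidefinite solution, so uniqueness cannot be contradicted; what fails is the stability clause, and the argument must instead show that a stabilizing solution is incompatible with a unit-modulus undetectable multiplier (from the Riccati equation such a mode $v$ satisfies $v^*PBR^{-1}B^TPv=0$, so it persists unchanged as a closed-loop multiplier, contradicting that $P$ is stabilizing). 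Second, two smaller steps are asserted rather than argued: that the pointwise monotone limit actually solves the PRE (this needs continuity of the Riccati flow with respect to terminal data, applied via the semigroup property $P(t;\tau,0)=P\bigl(t;\sigma,P(\sigma;\tau,0)\bigr)$), and the lower bound $J(t_0,\xi,u)\ge\xi^TP^*(t_0)\xi$ for an \emph{arbitrary} finite-cost $u$---your completing-the-squares identity only disposes of the terminal term $x^T(\tau)P^*(\tau)x(\tau)$ when $x(\tau)\to 0$, which for arbitrary finite-cost controls again requires the detectability/$L^2$ argument, not only for the optimal closed loop. All of these are fixable with the periodic machinery of \cite{Bittanti1991,10.1007/BFb0043803} you point to, so the proposal is sound in outline but not yet complete at these points.
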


In general, it is difficult to obtain an analytic expression for $P^*(\cdot)$, which is a nonlinear matrix-valued function of time $t$. In this paper, Fourier basis functions are adopted to approximate different periodic functions. For a continuous and $T$-periodic function $f(\cdot):\mathbb{R}\rightarrow\mathbb{R}$, partial sums of its Fourier series representation are
\begin{equation*}
f_N(x) = \frac{a_0}{2}+\sum_{i=1}^N\left(a_i\cos{(\omega ix)}+b_i\sin{(\omega ix)}\right),
\end{equation*}
where $\omega = 2\pi/T$, $N\in\mathbb{Z}_+$, $\{a_i\}_{i=0}^N$ and $\{b_i\}_{i=1}^N$ are Fourier coefficients. The following lemma gives the asymptotic property of using $f_N$ to  approximate $f$.
\begin{lemma}[{\cite[Theorem 1.5.1]{FourierBook}}]\label{FourierConverge}
If $f$ is $T$-periodic, continuous and piecewise continuously differentiable, then $f_N\rightarrow f$ uniformly, as $N \rightarrow \infty$.
\end{lemma}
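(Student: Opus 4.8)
The plan is to establish absolute and uniform convergence of the Fourier series via the Weierstrass $M$-test, and then separately identify the uniform limit with $f$. First I would exploit the piecewise $C^1$ hypothesis by relating the Fourier coefficients $\{a_i\}$, $\{b_i\}$ of $f$ to the Fourier coefficients $\{a_i'\}$, $\{b_i'\}$ of its derivative $f'$. Since $f$ is continuous and $T$-periodic while $f'$ is piecewise continuous (hence bounded and integrable on $[0,T]$), integrating by parts over each smooth piece and summing causes the boundary contributions to telescope and cancel, precisely because $f$ takes matching values at every junction and at the period endpoints. This yields the clean identities $a_i' = \omega i\, b_i$ and $b_i' = -\omega i\, a_i$, equivalently $|a_i| + |b_i| = (|a_i'| + |b_i'|)/(\omega i)$ for $i \ge 1$.

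Next I would turn these identities into a summable bound on $\sum_i (|a_i| + |b_i|)$. Applying the elementary inequality $|c|/i \le \tfrac12(1/i^2 + |c|^2)$ to each term gives
\begin{equation*}
\sum_{i=1}^\infty \left(|a_i| + |b_i|\right) \le \frac{1}{2\omega}\sum_{i=1}^\infty\left(\frac{2}{i^2} + |a_i'|^2 + |b_i'|^2\right).
\end{equation*}
The series $\sum_i 1/i^2$ converges, and since $f'$ is bounded on $[0,T]$, Bessel's inequality bounds $\sum_i (|a_i'|^2 + |b_i'|^2)$ by a multiple of $\int_0^T |f'(x)|^2\,dx < \infty$. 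Hence $\sum_i (|a_i| + |b_i|)$ converges. Because $|a_i\cos(\omega i x) + b_i\sin(\omega i x)| \le |a_i| + |b_i|$ for all $x$, the Weierstrass $M$-test then guarantees that $f_N$ converges uniformly (and absolutely) on $\mathbb{R}$ to some continuous $T$-periodic function $g$.

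It remains to show $g = f$, and this is the step I expect to require the most care, since uniform convergence by itself only identifies the limit up to the question of whether the trigonometric system actually represents $f$. I would argue that uniform convergence implies $L^2$ convergence on $[0,T]$, while completeness of the trigonometric system (Parseval's identity) forces the partial sums to converge to $f$ in $L^2$; therefore $g = f$ almost everywhere, and since both $g$ and $f$ are continuous, $g = f$ pointwise everywhere. Alternatively, the Dirichlet pointwise convergence theorem for piecewise $C^1$ functions gives $f_N(x) \to f(x)$ at every point of continuity of $f$, which is all of $\mathbb{R}$ here, again yielding $g = f$. The genuine obstacle throughout is the integration-by-parts step: it is precisely the global continuity of $f$ (not merely piecewise continuity) that makes the boundary terms cancel and produces the $1/i$ decay of the coefficients; without it one would obtain only $O(1/i)$ coefficients, which are not summable, and uniform convergence would fail.
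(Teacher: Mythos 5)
Your proof is correct, but there is nothing in the paper to compare it against: the paper states this lemma as a quoted classical result, citing Theorem 1.5.1 of its Fourier-analysis reference, and gives no proof of its own. What you supply is the standard textbook argument: global continuity of $f$ makes the boundary terms in the piecewise integration by parts telescope, giving $a_i'=\omega i\, b_i$, $b_i'=-\omega i\, a_i$ and hence $|a_i|+|b_i|=\left(|a_i'|+|b_i'|\right)/(\omega i)$; the elementary bound $|c|/i\le\tfrac{1}{2}\left(1/i^{2}+|c|^{2}\right)$ together with Bessel's inequality for $f'\in L^{2}[0,T]$ makes $\sum_{i}\left(|a_i|+|b_i|\right)$ finite; the Weierstrass $M$-test then yields absolute and uniform convergence to a continuous $T$-periodic limit $g$; and $g=f$ follows either from completeness of the trigonometric system in $L^{2}$ or from Dirichlet's pointwise convergence theorem. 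All steps check out, and your identification step is not circular: completeness (or Dirichlet) is established independently of partial-sum convergence, e.g.\ via Fej\'er's theorem on Ces\`aro means or Weierstrass approximation, so invoking it here is legitimate. You also correctly isolate where the hypotheses bite --- without global continuity the boundary terms fail to cancel, the coefficients decay only like $O(1/i)$, and uniform convergence genuinely fails (Gibbs phenomenon). The trade-off between the two treatments is simple: the paper's bare citation is the economical choice for a peripheral classical fact, consumed only in Lemma~\ref{VI_lemma_error_converge} and Theorem~\ref{VI_adp_converge}, whereas your argument makes the result self-contained modulo two standard black boxes (Bessel's inequality and completeness/Dirichlet), and would serve as a correct drop-in replacement for the citation.
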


When matrices $A(\cdot)$ and $B(\cdot)$ are unknown, the optimal solution $P^*(\cdot)$ can hardly be obtained directly due to the nonlinearity of the PRE. In next section, VI is exploited to find approximate optimal controllers directly from the input/state data collected along the controlled system trajectories. As it can be directly checked, we have
\begin{fact}\label{Isometric}
For $X\in\mathbb{R}^{n\times m}$, $Y\in\mathbb{S}^n$, $v\in\mathbb{R}^n$,
$$\vert\mathrm{vec}(X)\vert=\Vert X\Vert,\ \vert\mathrm{vecs}(Y)\vert=\Vert Y\Vert,\ v^TYv\equiv \tilde{v}^T\mathrm{vecs}(Y),$$ where \begin{align*}
    \mathrm{vec}(X) &= [X_1^T, X_2^T, \cdots, X_m^T]^T, \\
    \mathrm{vecs}(Y) &= [y_{11},\sqrt{2}y_{12},\cdots,\sqrt{2}y_{1m},y_{22},\sqrt{2}y_{23}, \\ 
    &\cdots,\sqrt{2}y_{m-1,m},y_{m,m}]^T\in \mathbb{R}^{\frac{1}{2}m(m+1)}, \\
    \tilde{v} &= [v_1^2,\sqrt{2}v_1v_2,\cdots,\sqrt{2}v_1v_n,v_2^2,\sqrt{2}v_2v_3, \\ 
    &\cdots,\sqrt{2}v_{n-1}v_n,v_n^2]^T\in \mathbb{R}^{\frac{1}{2}n(n+1)},
\end{align*}
$X_i$ is the $i$th column of $X$. In addition, there always exist operations $\mathrm{vec}^{-1}(\cdot)$ and $\mathrm{vecs}^{-1}(\cdot)$, such that $X = \mathrm{vec}^{-1}(\mathrm{vec}(X))$ and $Y = \mathrm{vecs}^{-1}{(\mathrm{vecs}(Y))}$, respectively.
\end{fact}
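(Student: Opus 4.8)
The plan is to verify each of the three identities by a direct computation on matrix entries, treating $\mathrm{vec}$ and $\mathrm{vecs}$ as prescribed rearrangements (with fixed scalings) of the entries of their arguments; no machinery beyond bookkeeping is needed, which is precisely why the statement is flagged as directly checkable.

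First I would handle $|\mathrm{vec}(X)| = \|X\|$. Since $\mathrm{vec}(X)$ simply stacks the columns $X_1,\dots,X_m$ of $X$ into a single column, its squared Euclidean norm equals $\sum_{i=1}^m |X_i|^2 = \sum_{i=1}^m \sum_{j=1}^n [X]_{j,i}^2$, which is exactly the sum of squares of all entries of $X$, i.e.\ $\|X\|^2$; taking square roots gives the identity.

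For the remaining two identities the whole point is that the $\sqrt{2}$ weights on the off-diagonal coordinates of $\mathrm{vecs}$ are calibrated to absorb the double-counting of off-diagonal entries in a symmetric matrix. For the norm, $|\mathrm{vecs}(Y)|^2 = \sum_i y_{ii}^2 + 2\sum_{i<j} y_{ij}^2$, while $\|Y\|^2 = \sum_i y_{ii}^2 + \sum_{i\neq j} y_{ij}^2 = \sum_i y_{ii}^2 + 2\sum_{i<j} y_{ij}^2$ after using $y_{ij}=y_{ji}$, so the two coincide. For the bilinear identity I would expand $v^T Y v = \sum_i y_{ii} v_i^2 + 2\sum_{i<j} y_{ij} v_i v_j$, again merging the $(i,j)$ and $(j,i)$ terms by symmetry; the inner product $\tilde{v}^T \mathrm{vecs}(Y)$ pairs each $v_i^2$ component of $\tilde v$ with the $y_{ii}$ component of $\mathrm{vecs}(Y)$ and each $\sqrt{2}\,v_i v_j$ component with the $\sqrt{2}\,y_{ij}$ component, and the two $\sqrt{2}$ factors multiply to the required $2$, matching the expansion term by term.

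Finally, the operators $\mathrm{vec}^{-1}$ and $\mathrm{vecs}^{-1}$ exist because $\mathrm{vec}$ and $\mathrm{vecs}$ are linear bijections onto their coordinate spaces: $\mathrm{vec}$ is an invertible reshaping of the $nm$ entries of $X$, and $\mathrm{vecs}$ is inverted by dividing the off-diagonal coordinates by $\sqrt{2}$ and then symmetrizing. There is no genuinely hard step here; the only point demanding care is keeping the ordering and the $\sqrt{2}$ scaling of $\mathrm{vecs}$ consistent across all three identities, and noting that for $Y\in\mathbb{S}^n$ the index range in the displayed definition should run to $n$, so that $\mathrm{vecs}(Y)\in\mathbb{R}^{n(n+1)/2}$ and $\tilde v\in\mathbb{R}^{n(n+1)/2}$ are conformable for the inner product.
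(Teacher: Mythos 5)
Your proof is correct and matches the paper's intent exactly: the paper offers no written proof (it states the fact ``as it can be directly checked''), and your entry-by-entry verification---stacking columns for $\mathrm{vec}$, using the $\sqrt{2}$ weights to absorb the double-counted off-diagonal terms for both $\vert\mathrm{vecs}(Y)\vert=\Vert Y\Vert$ and $v^TYv=\tilde{v}^T\mathrm{vecs}(Y)$, and inverting by unscaling and symmetrizing---is precisely that direct check. Your observation that the displayed definition of $\mathrm{vecs}(Y)$ should use index $n$ rather than $m$ (so that $\mathrm{vecs}(Y)\in\mathbb{R}^{\frac{1}{2}n(n+1)}$ is conformable with $\tilde{v}$) correctly identifies a notational slip in the statement.
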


\section{Value Iteration based Adaptive Dynamic Programming for Continuous-time Linear Periodic Systems}\label{section_VI}

The value iteration method is based on the asymptotic property of the solution to the finite-horizon PLQ optimal control problem. For any $t<t_f$ and a measurable locally essentially bounded input $u:[t,t_f)\rightarrow \mathbb{R}^m$, define finite-horizon cost
\begin{equation*}
\begin{split}
   &V(t,t_f,\xi,u(\cdot),G) =   x(t_f)^TGx(t_f)\\&+
   \int_{t}^{t_f}{\left(\left\vert C(t)x(t)\right\vert^2+u(t)^TR(t)u(t)\right)}dt,
\end{split}
\end{equation*}
for all $\xi\in \mathbb{R}^n$ and $G\in\mathbb{S}^{n}$, $G\geq 0$, where $x(t)=\xi$.
Starting at $P(t_f)=G$, the corresponding solution of the PRE (\ref{PRE}) at time $t<t_f$, denoted by $P(t;t_f,G)$, satisfies
\begin{equation*}
\xi^TP(t;t_f,G)\xi = \min_u{V(t,t_f,\xi,u(\cdot),G)}.
\end{equation*}
Generally, $P(\cdot;t_f,G)$ is not necessarily periodic, and different from the SPPS solutions in Definition \ref{PRE_solutions_definition}, $P(\cdot;t_f,G)$ satisfies PRE (\ref{PRE}) over time interval $(-\infty,t_f]$ \cite[Section 6.1.4]{10.2307/j.ctvcm4g0s}. Next, it is shown that $P(t;t_f,G)$ with $G\geq 0$ will approach the SPPS solution $P^*(t)$ of the PRE (\ref{PRE}), as time $t\rightarrow-\infty$.
\begin{lemma}\label{VI_monotone}
For any $0\leq G_1\leq G_2$ and $t<t_f$,
$$P(t;t_f,G_1)\leq P(t;t_f,G_2).$$
\end{lemma}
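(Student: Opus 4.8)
The plan is to exploit the variational characterization of $P(t;t_f,G)$ stated immediately above the lemma, namely $\xi^TP(t;t_f,G)\xi=\min_u V(t,t_f,\xi,u(\cdot),G)$, and to transfer the ordering $G_1\le G_2$ from the terminal weight into the value function. The key point is that $P(t;t_f,G)$ is nothing but the value function of the finite-horizon PLQ problem, and a value function is monotone in its terminal cost.

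First I would fix an arbitrary initial state $\xi\in\mathbb{R}^n$ and an arbitrary admissible (measurable, locally essentially bounded) control $u:[t,t_f)\to\mathbb{R}^m$. The state trajectory $x(\cdot)$ generated from $x(t)=\xi$ by this control does not depend on the terminal weight $G$, and the running-cost term $\int_t^{t_f}(|C(s)x(s)|^2+u(s)^TR(s)u(s))\,ds$ is identical for $G_1$ and $G_2$; the two finite-horizon costs therefore differ only in their terminal terms. Since $0\le G_1\le G_2$ gives $x(t_f)^TG_1x(t_f)\le x(t_f)^TG_2x(t_f)$, we obtain
$$V(t,t_f,\xi,u(\cdot),G_1)\le V(t,t_f,\xi,u(\cdot),G_2)$$
for every such $u$.

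Next I would pass to the value functions by taking the infimum over $u$. For any control $u$, the chain
$$\min_{u'}V(t,t_f,\xi,u'(\cdot),G_1)\le V(t,t_f,\xi,u(\cdot),G_1)\le V(t,t_f,\xi,u(\cdot),G_2)$$
holds; minimizing the right-hand side over $u$ then yields $\xi^TP(t;t_f,G_1)\xi\le \xi^TP(t;t_f,G_2)\xi$. Because this inequality is valid for every $\xi\in\mathbb{R}^n$ and both $P(t;t_f,G_1)$ and $P(t;t_f,G_2)$ are symmetric, it is exactly the assertion $P(t;t_f,G_1)\le P(t;t_f,G_2)$ in the positive semidefinite order.

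The only delicate point I anticipate is the legitimacy of the variational identity itself, i.e. that the minimum over $u$ is attained (or that the infimum is realized along the closed-loop optimal trajectory) and coincides with $\xi^TP(t;t_f,G)\xi$. This is precisely the displayed equation preceding the lemma, which I am entitled to assume, so the argument is essentially self-contained and requires no differential-equation comparison. As an alternative, one could instead examine $\Delta(t)=P(t;t_f,G_2)-P(t;t_f,G_1)$, show it satisfies a linear Lyapunov-type matrix differential equation with terminal value $\Delta(t_f)=G_2-G_1\ge 0$, and invoke a standard positivity/comparison result for such equations; that route is more computational, and the variational argument is the cleaner one.
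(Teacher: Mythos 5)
Your proposal is correct and follows essentially the same argument as the paper: both use the variational characterization $\xi^TP(t;t_f,G)\xi=\min_u V(t,t_f,\xi,u(\cdot),G)$, note that $G_1\leq G_2$ makes the terminal cost (and hence $V$) monotone for every fixed control, and then minimize over $u$ to transfer the inequality to the quadratic forms, concluding by arbitrariness of $\xi$. The additional details you supply (trajectory independence from $G$, identical running cost) and the alternative Riccati-comparison route are fine but not needed beyond what the paper records.
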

\begin{proof}
For any fixed $\xi\in \mathbb{R}^n$, and any measurable locally essentially bounded $u$,
$$\xi^TP(t;t_f,G_1)\xi\leq V(t,t_f,\xi,u,G_1)\leq V(t,t_f,\xi,u,G_2).$$
Minimizing the above inequalities simultaneously over $u$, we obtain
$\xi^TP(t;t_f,G_1)\xi\leq \xi^T P(t;t_f,G_2)\xi$.
Since $\xi$ is arbitrary, the proof is completed.
\end{proof}
\begin{theorem}\label{VI}
Under Assumption \ref{structure_assum}, if $G=G^T\geq 0$, then
\begin{equation}\label{VImodel}
\lim_{t\rightarrow-\infty}\left(P(t;t_f,G)-P^*(t)\right)=0.
\end{equation}
\end{theorem}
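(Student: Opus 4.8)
The plan is to establish (\ref{VImodel}) in three stages, combining the monotonicity in the terminal penalty (Lemma~\ref{VI_monotone}) with two structural facts about the periodic Riccati flow: that starting the finite-horizon problem from the terminal value $P^*(t_f)$ reproduces $P^*$ exactly, and that the influence of the terminal penalty is exponentially forgotten because of the stability guaranteed by Assumption~\ref{structure_assum}. Since $P^*(\cdot)$ is a global SPPS solution of (\ref{PRE}), its restriction to $(-\infty,t_f]$ is the unique backward solution with terminal value $P^*(t_f)$, so by uniqueness of solutions of the PRE,
\begin{equation*}
P(t;t_f,P^*(t_f)) = P^*(t), \qquad t\le t_f .
\end{equation*}
I will also use repeatedly the flow identity $P(t;t_f',G)=P\big(t;t_f,P(t_f;t_f',G)\big)$ for $t\le t_f\le t_f'$ (again a consequence of uniqueness), and the shift relation $P(t;t_f,G)=P(t-kT;t_f-kT,G)$ for $k\in\mathbb{Z}_+$, which holds because the coefficients of (\ref{PRE}) are $T$-periodic. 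By the shift relation, writing $t=t_*-kT$ with $t_*$ ranging over one period, the limit $t\to-\infty$ is equivalent to $\lim_{k\to\infty}P(t_*;t_f+kT,G)=P^*(t_*)$.

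For $G=0$ I would prove monotone convergence. Since $\xi^TP(t;t_f,0)\xi=\min_u V\ge 0$, the intermediate value $P(t_f+kT;t_f+(k+1)T,0)$ is positive semidefinite; combining the flow identity with Lemma~\ref{VI_monotone} then gives $P(t_*;t_f+kT,0)\le P(t_*;t_f+(k+1)T,0)$, so the sequence is nondecreasing in $k$. It is bounded above, because $0\le P^*(t_f+kT)$ together with Lemma~\ref{VI_monotone} and the exact invariance yields $P(t_*;t_f+kT,0)\le P(t_*;t_f+kT,P^*(t_f+kT))=P^*(t_*)$. A bounded monotone sequence of symmetric matrices converges; call the limit $\underline{P}(t_*)\le P^*(t_*)$. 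The shift relation makes $\underline{P}$ $T$-periodic, and since the convergence is monotone with continuous limit (Dini), it is uniform on compact intervals, so one may pass to the limit in the integral form of (\ref{PRE}) to conclude that $\underline{P}$ is a symmetric, periodic, positive semidefinite solution of the PRE. By the uniqueness assertion of Lemma~\ref{UniqueExist}, $\underline{P}=P^*$, proving the theorem for $G=0$.

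For general $G\ge 0$ I would sandwich. The lower bound is immediate from Lemma~\ref{VI_monotone}: $P(t;t_f,0)\le P(t;t_f,G)$, and the left-hand side tends to $P^*$. For the upper bound set $\Delta(t)=P(t;t_f,G)-P(t;t_f,0)\ge 0$; subtracting the two copies of (\ref{PRE}) and writing $S=BR^{-1}B^T$, the quadratic terms linearize and $\Delta$ solves the Lyapunov-type equation $-\dot{\Delta}=D_0^T\Delta+\Delta D_1$ with $\Delta(t_f)=G$, where $D_0=A-SP(\cdot;t_f,0)$ and $D_1=A-SP(\cdot;t_f,G)$. Hence $\Delta(t)=\Phi_0(t_f,t)^TG\,\Phi_1(t_f,t)$, with $\Phi_0,\Phi_1$ the transition matrices of $\dot z=D_0z$ and $\dot z=D_1z$. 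Since $P(\cdot;t_f,0)\to P^*$, the matrix $D_0$ approaches the optimal closed loop $D_+$ of Definition~\ref{PRE_solutions_definition}, whose characteristic multipliers lie in the open unit disk, so $\Phi_0(t_f,t)\to 0$ as $t\to-\infty$; a uniform bound on $\Phi_1$ then forces $\Delta(t)\to 0$, giving $\limsup_{t\to-\infty}(P(t;t_f,G)-P^*(t))\le 0$ and closing the sandwich.

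The main obstacle is exactly this last estimate. Obtaining the decay of $\Phi_0$ together with the uniform boundedness of $\Phi_1$ requires a uniform (in the horizon) exponential-forgetting property of the finite-horizon optimal closed loop, which must be \emph{derived} from stabilizability and detectability rather than assumed: the optimal closed loop need not be stable during the transient, and $S$ is in general singular, so the convenient "monotone-from-above" shortcut (initializing from a large $cI$ and decreasing) is unavailable. I would establish this forgetting property from the observability-type energy identity $\frac{d}{dt}\big(x^TP(\cdot;t_f,G)x\big)=-x^T\big(PSP+C^TC\big)x$ along the closed-loop trajectories, combined with the uniform bounds on the Riccati solution coming from Assumption~\ref{structure_assum}. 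This is the delicate point, and it is precisely where the classical arguments in the literature are reported to fail.
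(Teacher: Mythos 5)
Your $G=0$ argument is sound and is essentially the paper's own Case 2: monotonicity of $P(t;\tau,0)$ in the horizon, the upper bound $P(t;\tau,0)\le P^*(t)$, monotone convergence to a symmetric, periodic, positive semidefinite limit, and identification of that limit with $P^*$ via the uniqueness part of Lemma \ref{UniqueExist}. (One small repair: Dini's theorem presupposes continuity of the limit, which you do not yet know; instead pass to the limit in the integral form of (\ref{PRE}) using the uniform bound $0\le P(\cdot;\tau,0)\le P^*(\cdot)$ and dominated convergence, which yields that the limit solves the PRE and hence is continuous. The paper is equally terse on this point.)

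The genuine gap is in your treatment of general $G\ge 0$. Your upper bound rests on the representation $\Delta(t)=\Phi_0(t_f,t)^T G\,\Phi_1(t_f,t)$ and therefore on a bound for $\Phi_1$, the transition matrix of the closed loop $D_1=A-SP(\cdot;t_f,G)$ with $S=BR^{-1}B^T$, that is uniform as $t\to-\infty$. But the asymptotic behavior of $P(\cdot;t_f,G)$ is exactly what the theorem asserts, so nothing established earlier controls $D_1$; the argument is circular at this point. Your proposed escape, the energy identity $\frac{d}{dt}\bigl(x^TP_1x\bigr)=-x^T\bigl(P_1SP_1+C^TC\bigr)x$, only shows $x^TP_1x$ is nonincreasing, which does not bound $\vert x\vert$ when $P_1$ is singular (detectability alone permits singularity, e.g.\ when $G=0$ and the system is unobservable but detectable), and you yourself flag the step as not carried out. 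So as written the proof is incomplete for every $G\neq 0$. The paper sidesteps this difficulty entirely: it settles $G>0$ by invoking the external corollary of \cite{doi:10.1080/00207179208934305} on convergence to the strong solution, and then handles any remaining $G\ge 0$ by choosing $\bar G>0$ with $G\le\bar G$ and squeezing $P(t;t_f,0)\le P(t;t_f,G)\le P(t;t_f,\bar G)$ via Lemma \ref{VI_monotone}, both ends converging to $P^*$. If you want to keep your self-contained route (which would be more than the paper does, since it would remove the external citation), the fix is to linearize the difference around the \emph{known} closed loop rather than the unknown one: with $D_0=A-SP(\cdot;t_f,0)$ one gets $-\dot\Delta = D_0^T\Delta+\Delta D_0-\Delta S\Delta$, and a standard comparison argument then gives $0\le\Delta(t)\le \Phi_0(t_f,t)^T G\,\Phi_0(t_f,t)$, which involves only $\Phi_0$. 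Its decay does follow from your $G=0$ case: by periodicity and Dini the convergence $P(\cdot;t_f,0)\to P^*$ is uniform, so $D_0$ is an eventually arbitrarily small perturbation of the exponentially stable periodic matrix $D_+$ of Definition \ref{PRE_solutions_definition}, and exponential stability persists. That closes your sandwich without ever mentioning $\Phi_1$.
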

\begin{proof}
See the Appendix.
\end{proof}
\begin{remark}\label{hewer_wrong}
In 1975, Hewer drew the same conclusion \cite[Theorem 4.11]{doi:10.1137/0313077} with Theorem \ref{VI}. However, as pointed out in \cite[Section 6.3.4]{Bittanti1991} and \cite{doi:10.1137/0316003}, the proof of \cite[Theorem 4.11]{doi:10.1137/0313077} was based on \cite[Theorem 3.7]{doi:10.1137/0313077}, which is shown wrong by a counterexample in \cite[Section 2]{doi:10.1137/0316003}. The proof of Theorem \ref{VI} in this paper is new and is included for the sake of completeness.
\end{remark}
Note that the solutions of PRE (\ref{PRE}) need not evolve according to the same time variable in system (\ref{periodSYS}). To emphasize this point, in the rest of this paper we use $s\in\mathbb{R}$ for the algorithmic time, which is the time used in the PRE, while $t\in\mathbb{R}$ is reserved for the system evolution time, i.e. the time used in system (\ref{periodSYS}). This separation of time variables is essential in the development of our proposed algorithm in the sequel (also see Remark \ref{remark_algorithmic_time}).

Theorem \ref{VI} means that near-optimal solutions of $P^*(\cdot)$ can be found by solving the PRE (\ref{PRE}) backward in time with boundary condition $G\geq 0$. Concretely, we can solve the following final value problem on interval $[0,s_f]$,
\begin{equation}\label{PRE_s}
\begin{split}
    -\dot{P}(s) &= A^T(s)P(s)+P(s)A(s)+C^T(s)C(s)\\
    &-P(s)B(s)R^{-1}(s)B^T(s)P(s), \quad P(s_f)=G,
\end{split}
\end{equation}
where $G=G^T\geq 0$, $P(s)$ is short for $P(s;s_f,G)$. By Theorem \ref{VI}, if $s_f>0$ is large, then $P(s)$ is close to $P^*(s)$ for $s$ near $0$. However, the \textit{a priori} knowledge of $A(\cdot)$ and $B(\cdot)$ is still required to solve (\ref{PRE_s}). Next, a VI-based off-policy ADP algorithm is proposed to solve (\ref{PRE_s}) directly from input/state data, without the exact knowledge of $A(\cdot)$ and $B(\cdot)$.

Define matrix-valued functions
\begin{equation}\label{VI_variables_def}
\begin{split}
    H(s,t) &= A^T(t)P(s)+P(s)A(t),\\
    K(s,t) &= R^{-1}(t)B^T(t)P(s).
\end{split}
\end{equation}
Then (\ref{PRE_s}) can be rewritten as,
\begin{equation}\label{PRE_HK}
\begin{split}
    -\dot{P}(s)&=H(s,s)+C^T(s)C(s)\\
    &-K^T(s,s)R(s)K(s,s).
\end{split}
\end{equation}
By Theorem \ref{VI}, as long as $s_f$ is large enough, $H(s,s)$ and $K(s,s)$ with $s$ near $0$ will be good approximation to
\begin{equation}\label{HK_star}
    \begin{split}
        H^*(s) &= A^T(s)P^*(s)+P^*(s)A(s), \\ 
        K^*(s) &= R^{-1}(s)B^T(s)P^*(s).
    \end{split}
\end{equation}
With Lemma \ref{UniqueExist}, note that $K^*(\cdot)$ is the optimal control gain.

For fixed $s\in [0,s_f]$, $H(s,t)$ and $K(s,t)$ are periodic with respect to time $t\in\mathbb{R}$. Thus we can express $\mathrm{vecs}(H(s,t))$ and $\mathrm{vec}(K(s,t))$ by their Fourier series
\begin{equation}\label{VI_Fourier_series}
\begin{split}
    \mathrm{vecs}(H(s,t)) &= W^{H}(s)F_N(t)+e^{H}_{N}(s,t),\\
    \mathrm{vec}(K(s,t)) &= W^{K}(s)F_N(t)+e^{K}_{N}(s,t),
\end{split}
\end{equation}
where $W^{H}(s)\in \mathbb{R}^{n_1\times(2N+1)}$, $n_1=n(n+1)/2$ and $W^{K}(s)\in\mathbb{R}^{n_2\times(2N+1)}$, $n_2=mn$ are Fourier coefficients at algorithmic time $s$,
\begin{equation*}
    \begin{split}
        F_N(t) &= \left[1, \cos{(\omega t)}, \sin{(\omega t)}, \cos{(2\omega t)}, \sin{(2\omega t)},\right.\\
        &\left.\cdots, \cos{(N\omega t)}, \sin{(N\omega t)}\right]^T,
    \end{split}
\end{equation*}
$e_N^{H}(s,t)\in\mathbb{R}^{n_1}$ and $e_N^{K}(s,t)\in\mathbb{R}^{n_2}$ are truncation errors.

Here is a preview of subsequent development of our algorithm. We aim at solving (\ref{PRE_s}) (i.e. (\ref{PRE_HK})) directly from input/state data. To this end, firstly, using (\ref{VI_variables_def}) and (\ref{VI_Fourier_series}), the PRE (\ref{PRE_s}) (i.e. (\ref{PRE_HK})) is transformed into an equivalent differential equation (\ref{VI_data_eqn3}), in terms of Fourier coefficients $W^H(s)$ and $W^K(s)$. Secondly, by ignoring the truncation errors $e_N^{H}(s,t)$ and $e_N^{K}(s,t)$ in (\ref{VI_data_eqn3}), a new differential equation (\ref{VI_adp_eqn1}) is derived in terms of $\hat{W}^H(s)$ and $\hat{W}^K(s)$, where only input/state data is involved. Thirdly, it is shown (in Lemma \ref{VI_lemma_param_converge}) that, $\hat{W}^H(s)$ and $\hat{W}^K(s)$ given by the new differential equation (\ref{VI_adp_eqn1}) are close to the original Fourier coefficients $W^H(s)$ and $W^K(s)$ on $[0,s_f]$ under suitable conditions, despite of the ignorance of the truncation errors. This implies that the problem of solving model-based PRE (\ref{PRE_s}) (i.e. (\ref{PRE_HK})) on $[0,s_f]$ can be approached by solving the data-based differential equation (\ref{VI_adp_eqn1}) on $[0,s_f]$. Thus then (\ref{VI_adp_eqn1}) is solved numerically on $[0,s_f]$, and the numerical solutions with time indexes near $0$ are fitted by least squares to obtain two continuous functions $\bar{H}(s)$ and $\bar{K}(s)$. Finally, these two continuous functions are shown (in Theorem \ref{VI_adp_converge}) to be close to the optimal solutions $H^*(s)$ and $K^*(s)$ in (\ref{HK_star}), under suitable conditions. This fulfills our goal.

With the above outline in mind, the rest of this section is devoted to presenting the details.

For $M\in\mathbb{Z}_+\backslash\{0\}$, define $t_j=j\Delta t$, $j=0,1,2,\cdots,M$. $\Delta t$ is the sampling interval. Assuming a measurable locally essentially bounded input $u_0(\cdot): [0,t_M]\rightarrow\mathbb{R}^{m}$ is applied to system (\ref{periodSYS}) to collect input/state data for learning, we have
\begin{equation}\label{VI_mid1}
\begin{split}
    &\frac{\mathrm{d}x^T(t)P(s)x(t)}{\mathrm{d}t} = \dot{x}^T(t)P(s)x(t)+x^T(t)P(s)\dot{x}(t)\\
    &=x^T(t)H(s,t)x(t)+2u_0^T(t)R(t)K(s,t)x(t).
\end{split}
\end{equation}
Integrating both sides of (\ref{VI_mid1}) from $t_j$ to $t_{j+1}$, by Fact \ref{Isometric}, we obtain
\begin{equation}\label{VI_mid2}
\begin{split}
    &[\tilde{x}(t_{j+1})-\tilde{x}(t_j)]^T\mathrm{vecs}(P(s)) = \\
    &\int_{t_j}^{t_{j+1}}\tilde{x}^T(t)\mathrm{vecs}(H(s,t))\mathrm{d}t+\\
    &\int_{t_j}^{t_{j+1}}\left(x^T(t)\otimes (2u_0^T(t)R(t))\right)\mathrm{vec}(K(s,t))\mathrm{d}t.
\end{split}
\end{equation}
Using (\ref{VI_Fourier_series}), we can organize (\ref{VI_mid2}) for $j=0,1,2,\cdots,M-1$ into a single linear matrix equation
\begin{equation}\label{VI_data_eqn1}
    \Theta\left[\begin{array}{c}
         \mathrm{vec}(W^{H}(s))  \\
         \mathrm{vec}(W^{K}(s))
    \end{array}\right]+E_N(s)= \Gamma_{\tilde{x}}\mathrm{vecs}(P(s)),
\end{equation}
where
\begin{equation*}
    \begin{split}
    \Theta &= \left[\Delta^T_0, \Delta^T_1, \cdots, \Delta^T_{M-1}\right]^T,\  \Delta_j = \left[I_{F\tilde{x},j}, I_{Fxu,j}\right], \\
    \Gamma_{\tilde{x}} &= \left[\delta^T_0, \delta^T_1, \cdots, \delta^T_{M-1}\right]^T, \  \delta_{j} = \tilde{x}^T(t_{j+1})-\tilde{x}^T(t_j),  \\
    E_N(s) &= \left[e_{0,N}(s),e_{1,N}(s),\cdots,e_{M-1,N}(s)\right]^T, \\
    I_{F\tilde{x},j} &=  \int_{t_j}^{t_{j+1}}F^T_N\otimes\tilde{x}^T\mathrm{d}t,\\
    I_{Fxu,j} &= \int_{t_j}^{t_{j+1}} F^T_N\otimes x^T\otimes (2u_0^TR)\mathrm{d}t, \\
    \end{split}
\end{equation*}
\begin{equation}\label{VI_Truncat_error}
    \begin{split}
    e_{j,N}(s) &= \int_{t_j}^{t_{j+1}} \tilde{x}^T(t)e_N^{H}(s,t)dt\\
    &+\int_{t_j}^{t_{j+1}}\left(x^T(t)\otimes 2u_0^T(t)\right)e_N^{K}(s,t)\mathrm{d}t.     
    \end{split}
\end{equation}
The following Assumption is imposed on the matrix $\Theta$.
\begin{assumption}\label{VI_assum}
Given $N>0$, there exist $\bar{M}>(n_1+n_2)(2N+1)$ and $\alpha>0$ (independent of $N$), such that for all $M>\bar{M}$, $M\in\mathbb{Z}_+$,
\begin{equation}\label{rank_theta}
    \frac{1}{M}\Theta^T\Theta \geq \alpha I_{(n_1+n_2)(2N+1)}.
\end{equation}
Moreover, for all $t\in[0,t_M]$, $\vert x(t)\vert\leq \beta$, $\beta$ independent of $N$.
\end{assumption}
\begin{remark}\label{remark_PE}
Condition (\ref{rank_theta}) appeared in the past literature of ADP \cite{jiang2017robust}, \cite{Frank-book1}, \cite{8100742}, which is in the spirit of persistency of excitation (PE) in adaptive control. An exploration noise, such as sum of sinusoidal signals with diverse frequencies, can be added to $u_0$, if needed, to satisfy (\ref{rank_theta}).
\end{remark}
\begin{remark}\label{remark_reset}
Notice that by derivations from (\ref{VI_mid1}) to (\ref{VI_data_eqn1}), input/state data on any time interval of length $\Delta t$ satisfying (\ref{VI_mid2}) can be used to construct (\ref{VI_data_eqn1}). To fulfill condition $\vert x(t)\vert\leq \beta$, one way is to restart the system at $\vert x(t_0)\vert\leq\beta$, whenever the boundedness condition is violated. The choice of $t_0$ depends on the problem at hand. In rotor-blade system \cite{camino2019periodic}, for example, a choice of $t_0$ can be the moment when the angular position of the rotor is zero. 
\end{remark}

The following lemma shows that equation (\ref{VI_data_eqn1}) is differentiable in $s$.
\begin{lemma}\label{VI_lemma_derivative}
$W^{H}(\cdot)$, $W^{K}(\cdot)$, ${e}^{H}_{N}(\cdot,t)$, ${e}^{K}_{N}(\cdot,t)$ and $E_N(\cdot)$ in equations (\ref{VI_Fourier_series}) and (\ref{VI_data_eqn1}) are continuously differentiable in algorithmic time $s$.
\end{lemma}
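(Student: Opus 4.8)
\emph{Proof proposal.} The plan is to trace every $s$-dependence back to the single matrix-valued function $P(\cdot)$ and then propagate its regularity through the operations (linear, quadratic, and integration in $t$) that define the remaining quantities. The first step is to establish that $P(\cdot)\in C^1([0,s_f])$. The solution $P(s)=P(s;s_f,G)$ of the final-value problem (\ref{PRE_s}) exists and is continuous on $[0,s_f]$, being the value matrix of the finite-horizon PLQ problem recalled before Lemma \ref{VI_monotone}. Since $A(\cdot)$, $B(\cdot)$, $C(\cdot)$ and $R^{-1}(\cdot)$ are continuous, the right-hand side of (\ref{PRE_s}) is a continuous function of $s$ and a quadratic (hence smooth) function of $P$; evaluating it along the continuous curve $s\mapsto P(s)$ shows that $\dot P(\cdot)$ agrees with a continuous function of $s$, so $P(\cdot)$ is continuously differentiable.

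Next I would handle $W^{H}(\cdot)$ and $W^{K}(\cdot)$. For each fixed $t$, the maps $s\mapsto H(s,t)$ and $s\mapsto K(s,t)$ in (\ref{VI_variables_def}) are linear in $P(s)$ with $t$-dependent coefficients, hence $C^1$ in $s$ with $\partial_s H(s,t)=A^T(t)\dot P(s)+\dot P(s)A(t)$ and $\partial_s K(s,t)=R^{-1}(t)B^T(t)\dot P(s)$; because $\dot P(\cdot)$, $A(\cdot)$, $B(\cdot)$, $R^{-1}(\cdot)$ are continuous, these $s$-derivatives are jointly continuous in $(s,t)$. The columns of $W^{H}(s)$ and $W^{K}(s)$ are the Fourier coefficients of $\mathrm{vecs}(H(s,\cdot))$ and $\mathrm{vec}(K(s,\cdot))$, i.e.\ integrals over one period $[0,T]$ of these functions against $1$, $\cos(k\omega t)$, $\sin(k\omega t)$. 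On the compact set $[0,s_f]\times[0,T]$ the integrands and their $s$-derivatives are jointly continuous, hence bounded, so Leibniz's rule for differentiation under the integral sign applies and gives $W^{H}(\cdot),W^{K}(\cdot)\in C^1([0,s_f])$.

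The truncation errors then follow immediately: from (\ref{VI_Fourier_series}), $e_N^{H}(s,t)=\mathrm{vecs}(H(s,t))-W^{H}(s)F_N(t)$ and $e_N^{K}(s,t)=\mathrm{vec}(K(s,t))-W^{K}(s)F_N(t)$ are, for each fixed $t$, differences of functions already shown to be $C^1$ in $s$, with $s$-derivatives such as $\partial_s\mathrm{vecs}(H(s,t))-\dot W^{H}(s)F_N(t)$ that are jointly continuous in $(s,t)$. For $E_N(\cdot)$, each entry $e_{j,N}(s)$ in (\ref{VI_Truncat_error}) is the integral over $[t_j,t_{j+1}]$ of $e_N^{H}$ and $e_N^{K}$ paired with the data signals $\tilde x(t)$, $x(t)$ and $u_0(t)$, and I would differentiate once more under the integral sign to conclude $E_N(\cdot)\in C^1([0,s_f])$.

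The main obstacle is justifying this last interchange of $\partial_s$ and the $t$-integral, because $u_0(\cdot)$ is merely measurable and essentially bounded rather than continuous, so the clean ``jointly continuous on a compact set'' argument of the previous paragraph no longer applies directly. Here I would instead invoke the dominated form of Leibniz's rule: for a.e.\ $t$ the integrand is $C^1$ in $s$, and on $[0,s_f]\times[t_j,t_{j+1}]$ its $s$-derivative is bounded in modulus by an integrable function of $t$ (a constant times $\lvert\tilde x(t)\rvert$, $\lvert x(t)\rvert$, or $\lvert x(t)\rvert\,\lvert u_0(t)\rvert$), using the continuity, and hence boundedness, of $x(\cdot)$ on the compact sampling interval together with the essential boundedness of $u_0(\cdot)$. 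This uniform domination permits differentiation under the integral and shows the resulting derivative is continuous in $s$, which completes the argument.
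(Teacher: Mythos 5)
Your proposal is correct and follows the same overall route as the paper's proof: both trace the $s$-regularity back to $P(\cdot)$, obtain continuous differentiability of $W^{H}(\cdot)$ and $W^{K}(\cdot)$ by differentiating the Fourier-coefficient integrals under the integral sign (Leibniz rule), recover the truncation errors as differences of $C^1$ functions via (\ref{VI_Fourier_series}), and finish $E_N(\cdot)$ by differentiating (\ref{VI_Truncat_error}) under the integral once more. The differences are ones of rigor rather than strategy, and they work in your favor. First, you explicitly prove $P(\cdot)\in C^1$ from the final value problem (\ref{PRE_s}), whereas the paper simply asserts that $H(s,t)$ and $\partial_s H(s,t)$ are continuous ``from the definition (\ref{VI_variables_def})''; your step is precisely what makes that assertion true. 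Second, and more substantively, you notice that the second integrand in (\ref{VI_Truncat_error}) contains the factor $x^{T}(t)\otimes 2u_0^{T}(t)$, where $u_0(\cdot)$ is only measurable and locally essentially bounded, so the classical Leibniz rule for jointly continuous integrands --- which is what the paper invokes after noting continuity of $e_N^{H}$, $e_N^{K}$, $\partial_s e_N^{H}$, $\partial_s e_N^{K}$ in $(s,t)$ --- does not literally apply to that term. Your use of the dominated (measure-theoretic) form of the Leibniz rule, with the $s$-derivative of the integrand dominated by a constant multiple of $\vert x(t)\vert\,\vert u_0(t)\vert$ on the compact sampling interval (using continuity of $x(\cdot)$ and essential boundedness of $u_0(\cdot)$), is exactly the patch needed; it closes a small gap that the paper's own proof leaves open. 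In short: same decomposition and same key tool, but your version is the more careful of the two.
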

\begin{proof}
From the definition (\ref{VI_variables_def}), $H(s,t)$ and $\partial_sH(s,t)$ are continuous both in $s$ and $t$. Then by Leibniz integral rule and the definition of Fourier coefficients, we have
\begin{align}\label{VI_lemma_derivative_eqn1}
    [\dot{W}^{H}(s)]_{i,k} = \frac{2}{T}\int_{-T/2}^{T/2} [\mathrm{vecs}(\partial_sH(s,t))]_i p(k,t)dt,
\end{align}
where $i=1,2,\cdots,n_1$, $k=1,2,\cdots,2N+1$ and
$$p(k,t)=\begin{cases}
1, & \text{if } k=1 \\
\cos{(\omega tk/2)}, & \text{if } k\text{ is even} \\
\sin{(\omega t\lfloor k/2\rfloor)}, & \text{if } k\text{ is odd} \text{ and } k>1
\end{cases}.
$$
Thus by \cite[Definition 10.1]{rudin1976principles}, $W^{H}(\cdot)$ is continuously differentiable in $s$. By (\ref{VI_Fourier_series}), ${e}^{H}_{N}(\cdot,t)$ is continuously differentiable in $s$. With similar arguments, we know that $W^{K}(\cdot)$ and ${e}^{K}_{N}(\cdot,t)$ are continuously differentiable in $s$. Note that ${e}^{H}_{N}(s,t)$, ${e}^{K}_{N}(s,t)$, $\partial_s{e}^{H}_{N}(s,t)$ and $\partial_s{e}^{K}_{N}(s,t)$ are continuous both in $s$ and $t$. Again, by Leibniz integral rule, (\ref{VI_Truncat_error}) and \cite[Definition 10.1]{rudin1976principles}, $E_N(\cdot)$ is continuously differentiable in $s$. This completes the proof.
\end{proof}
By Lemma \ref{VI_lemma_derivative}, equations (\ref{PRE_HK}) and (\ref{VI_Fourier_series}), and Assumption \ref{VI_assum}, taking derivatives with respect to $s$ on both sides of (\ref{VI_data_eqn1}), we have
\begin{equation}\label{VI_data_eqn3}
    \left[\begin{array}{c}
         \mathrm{vec}(\dot{W}^{H}(s)) \\
         \mathrm{vec}(\dot{W}^{K}(s))
    \end{array}\right]=\mathcal{H}\left(W(s),s\right)+\mathcal{G}\left(W(s),s\right),
\end{equation}
where 
\begin{align*}
    &\mathcal{H}\left(W(s),s\right) =\Theta^\dagger\Gamma_{\tilde{x}}\left[ -W^{H}(s)F_N(s)-\mathrm{vecs}(C^T(s)C(s))\right.\\
    &\left.+\mathrm{vecs}\left((\mathrm{vec}^{-1}(W^{K}(s)F_N(s)))^TR(s)\right.\right.\\
    &\left.\left.\mathrm{vec}^{-1}(W^{K}(s)F_N(s))\right)
    \right], \\
    &\mathcal{G}\left(W(s),s\right) = \Theta^\dagger\left[-\dot{E}_N(s)\right.\\
    &\left.+\Gamma_{\tilde{x}}\left(
    \mathrm{vecs}\left(
    (\mathrm{vec}^{-1}(W^{K}(s)F_N(s)))^TR(s)\mathrm{vec}^{-1}(e_N^{K}(s,s))\right.\right.\right.\\
    &\left.\left.\left.+  (\mathrm{vec}^{-1}(e_N^{K}(s,s)))^TR(s)\mathrm{vec}^{-1}(W^{K}(s)F_N(s))\right.\right.\right.\\
    &\left.\left.\left.+(\mathrm{vec}^{-1}(e_N^{K}(s,s)))^TR(s)\mathrm{vec}^{-1}(e_N^{K}(s,s))
    \right)-e^{H}_N(s,s)
    \right)\right],  \\
    &W(s) = \left[\left(W^{H}(s)\right)^T, \left(W^{K}(s)\right)^T\right]^T.
\end{align*}
In (\ref{VI_data_eqn3}), all the terms containing the truncation errors are grouped into $\mathcal{G}(W(s),s)$. If $\mathcal{G}(W(s),s)$ is ignored, we obtain the following differential equation
\begin{equation}\label{VI_adp_eqn1}
     \left[\begin{array}{c}
         \mathrm{vec}(\dot{\hat{W}}^{H}(s)) \\
         \mathrm{vec}(\dot{\hat{W}}^{K}(s))
    \end{array}\right] = \mathcal{H}\left(\hat{W}(s),s\right),\qquad \hat{W}(s_f) = 0.
\end{equation}
where
$$\hat{W}(s) = \left[\left(\hat{W}^{H}(s)\right)^T, \left(\hat{W}^{K}(s)\right)^T\right]^T.$$
Notice that no explicit system dynamics information is contained in (\ref{VI_adp_eqn1}). Thus if $\hat{W}(s)$ is close to $W(s)$, approximate optimal controllers are possible to be found in view of (\ref{VI_Fourier_series}), directly from the collected data. Indeed, the following two lemmas show that $\hat{W}(s)$ can be made close to $W(s)$.
\begin{remark}\label{remark_algorithmic_time}
Through the derivations from (\ref{PRE_s}) to (\ref{VI_adp_eqn1}), equation (\ref{VI_mid1}) is a critical step in getting rid of the knowledge of system dynamics. The validity of (\ref{VI_mid1}) is a result of distinguishing the algorithmic time $s$ from the system evolution time $t$. This justifies the importance of the separation of time variables. 
\end{remark}
\begin{assumption}\label{VI_assum_differentiable}
The matrix-valued functions $A(\cdot)$ and $B(\cdot)$ are $T$-periodic and continuously differentiable on $\mathbb{R}$.
\end{assumption}
\begin{lemma}\label{VI_lemma_error_converge}
Under Assumptions \ref{structure_assum}, \ref{VI_assum} and \ref{VI_assum_differentiable}, for any $-\infty<s'<s_f$:
\begin{enumerate}
    \item $e_N^{H}(s,t)$, $e_N^{K}(s,t)$, $\partial_s e_N^{H}(s,t)$, $\partial_s e_N^{K}(s,t)$ all converge uniformly to $0$ on $[s',s_f]\times \mathbb{R}$, as $N\rightarrow \infty$.
    \item for any $ \epsilon>0$, there exists $\bar{N}>0$, such that $\forall N>\bar{N}$, 
$$\sup_{s\in[s',s_f]}\vert\mathcal{G}\left(W(s),s\right)\vert<\epsilon.$$
\end{enumerate}
\end{lemma}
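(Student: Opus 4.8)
The plan is to exploit the crucial structural fact that $H(s,t)$ and $K(s,t)$ in (\ref{VI_variables_def}) depend on the algorithmic time $s$ only through the matrix $P(s)$, and that this dependence is \emph{linear}. This linearity is precisely what upgrades the single-function, pointwise-in-$t$ Fourier convergence of Lemma \ref{FourierConverge} into convergence that is uniform jointly in $(s,t)$. Before either part, I would record that (\ref{PRE_s}) with $G\ge 0$ carries the interpretation of a finite-horizon optimal cost, so its solution exists and stays bounded on the compact interval $[s',s_f]$ (choosing $u\equiv 0$ furnishes a finite upper bound for every $s$, excluding finite escape), and since the right-hand side of (\ref{PRE_s}) is continuous, both $P(\cdot)$ and $\dot P(\cdot)$ are continuous and hence bounded on $[s',s_f]$.

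For the first claim, I would fix $s$ and view $\mathrm{vecs}(H(s,\cdot))$ as a $T$-periodic function of $t$ whose entries are fixed linear combinations of the scalar periodic functions $A_{ij}(\cdot)$ with coefficients supplied by $P(s)$. Under Assumption \ref{VI_assum_differentiable} each $A_{ij}$ is continuous and continuously differentiable, so by Lemma \ref{FourierConverge} the truncation error of each $A_{ij}$ tends to $0$ uniformly in $t$; since the Fourier truncation operator is linear and commutes with left/right multiplication by the constant matrix $P(s)$, the error $e_N^{H}(s,t)$ is bounded by a dimensional constant times $\|P(s)\|$ times the worst entrywise truncation error of $A$, and taking the supremum over $s\in[s',s_f]$ with the uniform bound on $\|P(s)\|$ yields uniform convergence on $[s',s_f]\times\mathbb{R}$. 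The function $R^{-1}(\cdot)B^{T}(\cdot)$ is continuous and piecewise continuously differentiable (as $R$ is positive definite and piecewise $C^1$ and $B$ is $C^1$), so the identical argument applied to $K(s,\cdot)$ handles $e_N^{K}$. For the derivatives, differentiating (\ref{VI_Fourier_series}) in $s$ and using that the coefficients of $\partial_s H(s,\cdot)=A^{T}\dot P(s)+\dot P(s)A$ are exactly $\dot W^{H}(s)$ as in (\ref{VI_lemma_derivative_eqn1}), I see that $\partial_s e_N^{H}(s,t)$ is itself the Fourier truncation error of $A^{T}\dot P(s)+\dot P(s)A$, i.e. the same expression as $e_N^{H}$ with $P(s)$ replaced by $\dot P(s)$; the bound on $\|\dot P(s)\|$ then gives its uniform convergence, and likewise for $\partial_s e_N^{K}$.

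For the second claim, I would bound $\mathcal{G}(W(s),s)$ term by term, the essential point being that every gain multiplying a vanishing error term is bounded \emph{uniformly in both $M$ and $N$}. From Assumption \ref{VI_assum}, $\sigma_{\min}(\Theta)\ge\sqrt{\alpha M}$, so $\|\Theta^\dagger\|\le 1/\sqrt{\alpha M}$; and since each row $\delta_j$ of $\Gamma_{\tilde x}$ satisfies $|\delta_j|\le 2\beta^2$ by Fact \ref{Isometric} and $|x|\le\beta$, one gets $\|\Gamma_{\tilde x}\|\le 2\beta^2\sqrt M$, whence $\|\Theta^\dagger\Gamma_{\tilde x}\|\le 2\beta^2/\sqrt\alpha$, independent of $M$ and $N$. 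For the term $\Theta^\dagger\dot E_N(s)$, differentiating (\ref{VI_Truncat_error}) in $s$ and bounding each $\dot e_{j,N}(s)$ by $\Delta t$ times a multiple of $\sup|\partial_s e_N^{H}|+\sup|\partial_s e_N^{K}|$ (using $|x|\le\beta$ and the essential boundedness of $u_0$) gives $\|\dot E_N(s)\|\le\sqrt M\,\Delta t\,\kappa_N$ with $\kappa_N\to 0$ by the first claim, so $\|\Theta^\dagger\dot E_N(s)\|\le \Delta t\,\kappa_N/\sqrt\alpha\to 0$ uniformly in $s$. For the remaining bracket, Fact \ref{Isometric} bounds the cross and quadratic terms by $\|R(s)\|$ times products of $\|\mathrm{vec}^{-1}(W^{K}(s)F_N(s))\|$ (bounded since $\|K(s,s)\|$ is bounded on $[s',s_f]$ and $e_N^{K}\to 0$) with $|e_N^{K}(s,s)|$ or $|e_N^{K}(s,s)|^2$, while the last entry is $e_N^{H}(s,s)$; all tend to $0$ uniformly in $s$ and are premultiplied by the bounded operator $\Theta^\dagger\Gamma_{\tilde x}$, so that $\sup_{s\in[s',s_f]}|\mathcal{G}(W(s),s)|\to 0$.

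The main obstacle, and the step I would treat most carefully, is the bookkeeping that keeps the estimates uniform in the data length $M$: the $1/\sqrt{M}$ decay of $\|\Theta^\dagger\|$ must cancel exactly the $\sqrt M$ growth of both $\|\Gamma_{\tilde x}\|$ and $\|\dot E_N\|$, and this cancellation relies on the constants $\alpha$ and $\beta$ in Assumption \ref{VI_assum} being independent of $N$. A secondary subtlety is the passage from the single-function statement of Lemma \ref{FourierConverge} to convergence uniform in $s$; it is the linearity of $H$ and $K$ in $P(s)$, together with the compactness of $[s',s_f]$, that licenses factoring the $s$-dependence out of the truncation error as the bounded scalar $\|P(s)\|$ (respectively $\|\dot P(s)\|$).
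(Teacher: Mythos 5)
Your proposal is correct, and on part 1) it takes a genuinely different route from the paper. The paper cannot invoke Lemma \ref{FourierConverge} directly (that lemma concerns a single fixed function), so it essentially reproves Fourier convergence in a parameter-uniform form: it forms the quotient $\mathcal{F}(s,\tau)=\left(\mathrm{vecs}(H(s,t-\tau))-\mathrm{vecs}(H(s,t))\right)/\sin(\omega\tau/2)$, bounds it uniformly in $(s,t,\tau)$ using Lipschitz continuity of $A(\cdot)$ (from Assumption \ref{VI_assum_differentiable}) together with boundedness of $P(s)$ supplied by Theorem \ref{VI}, and then runs the Dirichlet-kernel/Riemann--Lebesgue argument of \cite[Theorem 8.14]{rudin1976principles} to conclude that $W^{H}(s)F_N(t)-\mathrm{vecs}(H(s,t))\rightarrow 0$ uniformly on $[s',s_f]\times\mathbb{R}$. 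Your key observation --- that $H(s,t)$ and $K(s,t)$ are \emph{linear} in $P(s)$, so the truncation operator factors through the bounded coefficients $P(s)$ (and through $\dot{P}(s)$ for the derivative errors, via the identification of $\dot{W}^{H}(s)$ with the Fourier coefficients of $\partial_s H(s,\cdot)$ from Lemma \ref{VI_lemma_derivative}) --- lets you obtain joint uniformity from the single-function Lemma \ref{FourierConverge} applied entrywise to the fixed functions $A(\cdot)$ and $R^{-1}(\cdot)B^{T}(\cdot)$. This is shorter and makes the source of the uniformity in $s$ transparent; what the paper's kernel argument buys is independence from this factoring structure (it would survive a nonlinear dependence on $P(s)$, provided the Lipschitz and boundedness estimates held). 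On part 2) your argument has the same structure as the paper's (norm of $\Theta^\dagger$ times bounded data factors times the vanishing errors of part 1), but your bookkeeping is in fact the more careful one: the paper's displayed chain $M\Vert\Theta^\dagger\Vert_2(1+2\beta^4)\epsilon_0\leq(1+2\beta^4)\epsilon_0/\alpha$ does not follow literally from Assumption \ref{VI_assum}, which only gives $\sigma_{\min}(\Theta)\geq\sqrt{\alpha M}$, whereas your pairing of the $\sqrt{M}$ growth of $\Vert\Gamma_{\tilde{x}}\Vert$ and $\Vert\dot{E}_N(s)\Vert$ against the $1/\sqrt{\alpha M}$ decay of $\Vert\Theta^\dagger\Vert_2$ is what actually yields a bound independent of $M$; since $M$ is fixed once the data are collected and only uniformity in $N$ matters, this discrepancy does not affect the lemma's conclusion in either treatment.
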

\begin{proof}
See the Appendix.
\end{proof}
\begin{lemma}\label{VI_lemma_param_converge}
Let $G=0$ in PRE (\ref{PRE_s}). Under Assumptions \ref{structure_assum}, \ref{VI_assum} and \ref{VI_assum_differentiable}, for any $ \epsilon>0$ and $0<s_f<\infty$, there exists $\bar{N}>0$, such that $\forall N>\bar{N}$,
\begin{equation*}
    \begin{split}
        \sup_{s\in[0,s_f]}\Vert W^{H}(s)-\hat{W}^{H}(s)\Vert&<\epsilon,\\ \sup_{s\in[0,s_f]}\Vert W^{K}(s)-\hat{W}^{K}(s)\Vert&<\epsilon.
    \end{split}
\end{equation*}
\end{lemma}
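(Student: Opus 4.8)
The plan is to read (\ref{VI_data_eqn3}) and (\ref{VI_adp_eqn1}) as two final-value problems for the stacked coefficient vector, driven by the same vector field $\mathcal{H}(\cdot,s)$ but differing by the perturbation $\mathcal{G}$, and to compare their solutions by a Gr\"onwall argument on $[0,s_f]$. First I would verify that the two problems share terminal data: with $G=0$ one has $P(s_f)=0$ in (\ref{PRE_s}), hence $H(s_f,\cdot)\equiv 0$ and $K(s_f,\cdot)\equiv 0$ by (\ref{VI_variables_def}), so every Fourier coefficient and every truncation error vanishes at $s_f$; thus $W(s_f)=\hat{W}(s_f)=0$, and the difference $d(s):=W(s)-\hat{W}(s)$ (in the stacked $\mathrm{vec}$ coordinates) satisfies $d(s_f)=0$ together with $\dot{d}(s)=\mathcal{H}(W(s),s)-\mathcal{H}(\hat{W}(s),s)+\mathcal{G}(W(s),s)$. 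Integrating backward from $s_f$ and taking norms yields $|d(s)|\le\int_s^{s_f}|\mathcal{H}(W(\tau),\tau)-\mathcal{H}(\hat{W}(\tau),\tau)|\,d\tau+\int_s^{s_f}|\mathcal{G}(W(\tau),\tau)|\,d\tau$, which is the backbone of the estimate.

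Second, I would assemble the three ingredients needed to close this inequality. One: a Lipschitz bound for $\mathcal{H}(\cdot,s)$; since $\mathcal{H}$ is affine in $W^{H}$ and quadratic in $W^{K}$ through the reconstructions $W^{H}F_N(s)$, $W^{K}F_N(s)$, premultiplied by $\Theta^\dagger\Gamma_{\tilde{x}}$, it is locally Lipschitz on bounded sets, and Assumption \ref{VI_assum} is exactly what keeps the constants $N$-independent, because $\frac1M\Theta^T\Theta\ge\alpha I$ gives $\sigma_{\min}(\Theta)\ge\sqrt{\alpha M}$ (so the spectral norm of $\Theta^\dagger$ is at most $(\alpha M)^{-1/2}$) while $|x|\le\beta$ forces $\|\Gamma_{\tilde{x}}\|$ to be of order $\beta^2\sqrt{M}$, whence $\Theta^\dagger\Gamma_{\tilde{x}}$ is bounded independently of $M$ and $N$. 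Two: a uniform-in-$N$ bound on the exact trajectory $W(s)$ on $[0,s_f]$; by Parseval the squared coefficient norm $\|W^{H}(s)\|^2$ is dominated by the $L^2$-norm of $\mathrm{vecs}(H(s,\cdot))$ over one period, which, by continuity of $H$ on the compact set $[0,s_f]\times[0,T]$, is uniformly bounded in $s$ and $N$ (and likewise for $W^{K}$); a short continuation argument then confines $\hat{W}$ to a fixed tube around $W$ so that the Lipschitz constant applies along both trajectories. Three: smallness of the forcing, which is precisely Lemma \ref{VI_lemma_error_converge} part (2), giving $\sup_{s\in[0,s_f]}|\mathcal{G}(W(s),s)|<\epsilon_{G}$ with $\epsilon_{G}\to0$ as $N\to\infty$. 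Feeding these into the displayed inequality and applying Gr\"onwall gives $|d(s)|\le s_f e^{L s_f}\epsilon_{G}$ for all $s\in[0,s_f]$; sending $N\to\infty$ drives the right-hand side to zero, and Fact \ref{Isometric} converts $|d(s)|$ into the Frobenius norms $\|W^{H}(s)-\hat{W}^{H}(s)\|$ and $\|W^{K}(s)-\hat{W}^{K}(s)\|$ asserted in the statement.

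The main obstacle I anticipate is making the Gr\"onwall exponent genuinely uniform in $N$. The reconstruction operator $W\mapsto WF_N(s)$ has norm $|F_N(s)|=\sqrt{N+1}$, so the chain rule threatens to inflate the Lipschitz constant of $\mathcal{H}$ to order $\sqrt{N}$; since Lemma \ref{VI_lemma_error_converge} asserts only $\epsilon_{G}\to0$ without a rate, an exponent $L\sim\sqrt{N}$ would let $e^{L s_f}$ overwhelm $\epsilon_{G}$ and the estimate would collapse. Resolving this requires exploiting structure rather than the crude product bound: the vector field $\mathcal{H}(\cdot,s)$ takes values in the fixed subspace $\mathrm{range}(\Theta^\dagger\Gamma_{\tilde{x}})$, whose dimension is at most $n_1$ and is independent of $N$, so that $\hat{W}$ in fact evolves within a fixed-dimensional coordinate; and along the exact trajectory the reconstructed values $W^{H}(s)F_N(s)=\mathrm{vecs}(H(s,s))-e_N^{H}(s,s)$ and $W^{K}(s)F_N(s)=\mathrm{vec}(K(s,s))-e_N^{K}(s,s)$ are $O(1)$ uniformly in $N$ by (\ref{VI_Fourier_series}). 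Leveraging these two facts to extract an $N$-independent effective Lipschitz constant is the delicate technical step; once it is secured, the comparison-and-Gr\"onwall skeleton above delivers the conclusion of Lemma \ref{VI_lemma_param_converge} routinely.
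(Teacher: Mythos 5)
Your skeleton---form the difference $d(s)=W(s)-\hat W(s)$, note $d(s_f)=0$ when $G=0$, feed in the smallness of $\mathcal{G}$ from part (2) of Lemma \ref{VI_lemma_error_converge}, and compare the perturbed dynamics with the unperturbed ones---is exactly the paper's strategy: the paper writes the difference system (\ref{VI_lemma_param_converge_ode1}), the unperturbed system (\ref{VI_lemma_params_converge_ode2}) whose solution is $Z_0\equiv 0$, and then invokes continuous dependence on perturbations (\cite[Theorem 55]{sontag2013mathematical}) in place of your explicit Gr\"onwall estimate. The problem is that your write-up names its own central difficulty and then defers it: you correctly observe that the naive Lipschitz constant of $W\mapsto\mathcal{H}(W,s)$ inherits the factor $\vert F_N(s)\vert=\sqrt{N+1}$, so that $e^{Ls_f}\epsilon_G$ need not vanish since Lemma \ref{VI_lemma_error_converge} carries no rate, and you then declare the extraction of an $N$-independent effective constant to be ``the delicate technical step'' to be ``secured.'' That step is the lemma; leaving it open leaves the proof open. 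Moreover, neither of your two proposed rescues works as stated: the fact that $\Theta^\dagger\Gamma_{\tilde{x}}$ has range of dimension at most $n_1$ constrains where $\dot{\hat W}$ lives, not the Lipschitz constant of $\mathcal{H}$, which is what enters Gr\"onwall; and the $O(1)$ bound on $W^{H}(s)F_N(s)$ and $W^{K}(s)F_N(s)$ holds only along the exact trajectory via (\ref{VI_Fourier_series}), whereas Gr\"onwall requires control on a whole tube around it, uniformly in $N$.

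The second gap is existence of $\hat W$ on all of $[0,s_f]$. Equation (\ref{VI_adp_eqn1}) is quadratic in $\hat W^{K}$, hence only locally Lipschitz, and its backward solution could a priori escape in finite time before reaching $s=0$; your one-sentence ``short continuation argument'' confining $\hat W$ to a tube around $W$ is circular unless the closeness estimate has already been proved on the (unknown) interval of existence. The paper resolves this with an explicit two-step structure: first it proves the estimate (\ref{Difference_inequal}) \emph{conditionally on existence} over $[s',s_f]$, and then it uses that conditional estimate to rule out finite escape, by introducing the maximal existence intervals $(S_N,s_f]$, showing $\{S_N\}$ is non-increasing via a contradiction with (\ref{Difference_inequal}), and excluding $\lim_{N\to\infty}S_N=S>-\infty$ by a second contradiction, so that $S_N\to-\infty$ and the solution exists on $[0,s_f]$ for all large $N$. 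Your proposal does not articulate this ordering at all. In fairness, your $\sqrt{N+1}$ objection is a sharp observation: the class-$\mathcal{K}_\infty$ gain $g$ the paper obtains from \cite[Theorem 55]{sontag2013mathematical} is in principle also $N$-dependent through the Lipschitz constant of $\mathcal{H}$, a point the paper passes over silently. But identifying a difficulty that the reference also finesses is not the same as resolving it, so the attempt stands as incomplete.
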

\begin{proof}
See the Appendix.
\end{proof}
With Lemma \ref{VI_lemma_param_converge}, we can solve equation (\ref{VI_adp_eqn1}) by any convergent numerical method \cite[Section 213]{butcher2016numerical} backward in time on $[0,s_f]$, to find approximate values of $W(s)$. Supposing in the numerical method, the step size is $h$, and at step values $\{s_k\}_{k=0}^{L}$, $s_k = kh$, $hL=s_f$, the numerical solutions of (\ref{VI_adp_eqn1}) are computed, denoted by $\{\hat{W}^{H}_k\}_{k=0}^L$ and $\{\hat{W}^{K}_k\}_{k=0}^L$. Then we can define
\begin{equation}\label{HK_hat}
        \mathrm{vecs}(\hat{H}_k) = \hat{W}^{H}_kF_N(s_k),\quad \mathrm{vec}(\hat{K}_k) = \hat{W}^{K}_kF_N(s_k),
\end{equation}
which are used as approximations to $H(s,s)$ and $K(s,s)$ at $\{s_k\}_{k=0}^{L}$, respectively. But (\ref{HK_hat}) is discrete, which is not convenient. So we would like to fit the part of the data close to (\ref{HK_star}) in (\ref{HK_hat}) to get two continuous functions $\bar{H}(\cdot)$ and $\bar{K}(\cdot)$. Supposing we are able to choose a $\bar{L}\in\mathbb{Z}_+$, satisfying $s_{\bar{L}}>T$ and $\lfloor L/2\rfloor>\bar{L}>2N+1$, define
\begin{align*}
    &\mathcal{U}= \left[\begin{array}{cccc}
         F_N(s_0) & F_N(s_1) & \cdots & F_N(s_{\bar{L}})
    \end{array}\right]^T,\\
    &\mathcal{V}=\left[\begin{array}{cccc}
         \mathrm{vecs}(\hat{H}_0)  &
         \mathrm{vecs}(\hat{H}_1)  &
         \cdots &
         \mathrm{vecs}(\hat{H}_{\bar{L}})
    \end{array}\right]^T, \\
    &\mathcal{W}=\left[\begin{array}{cccc}
         \mathrm{vec}(\hat{K}_0)  &
         \mathrm{vec}(\hat{K}_1)  &
         \cdots &
         \mathrm{vec}(\hat{K}_{\bar{L}})
    \end{array}\right]^T. \\
\end{align*} 
\begin{assumption}\label{rank_U}
Given $N>0$, there exist $\lfloor L/2\rfloor>\bar{L}_0>2N+1$ and $\alpha>0$ (independent of $N$), such that for all $\lfloor L/2\rfloor>\bar{L}>\bar{L}_0$, $s_{\bar{L}}>T$,
$$\frac{1}{\bar{L}}\mathcal{U}^T\mathcal{U}\geq \alpha I_{2N+1}.$$
\end{assumption}
Under Assumption \ref{rank_U}, over-determined least squares fittings are implemented on data sets $\{\mathcal{V},\mathcal{U}\}$ and $\{\mathcal{W},\mathcal{U}\}$, respectively, to get
\begin{equation}\label{HK_bar}
\begin{split}
    \bar{H}(t) &= \mathrm{vecs}^{-1}(\bar{W}^{H}F_N(t)),\\ \bar{K}(t) &= \mathrm{vec}^{-1}(\bar{W}^{K}F_N(t)),
\end{split}
\end{equation}
where
\begin{equation}\label{VI_final_fit}
    (\bar{W}^{H})^T=\mathcal{U}^\dagger\mathcal{V},\qquad (\bar{W}^{K})^T=\mathcal{U}^\dagger\mathcal{W}.
\end{equation}
We are in a position to state our main result on approximating the optimal solution of the infinite-horizon PLQ problem without the precise knowledge of system dynamics.
\begin{theorem}\label{VI_adp_converge}
Consider the infinite-horizon PLQ optimal control problem of system (\ref{periodSYS}) with cost ($\ref{cost_inf}$). Under Assumptions \ref{structure_assum}, \ref{VI_assum}, \ref{VI_assum_differentiable} and \ref{rank_U}, for any $\epsilon>0$, there exist $\bar{s}_f>0$, $\bar{N}>0$, $\bar{h}>0$, such that $\forall s_f>\bar{s}_f$, $\forall N>\bar{N}$, any $0<h<\bar{h}$, we have
$$\sup_{t\in\mathbb{R}}\Vert\bar{H}(t)-H^*(t)\Vert<\epsilon,\quad \sup_{t\in\mathbb{R}}\Vert\bar{K}(t)-K^*(t)\Vert<\epsilon,$$
where $\bar{L}$ is chosen to satisfy $s_{\bar{L}}>T$, $\lfloor L/2\rfloor>\bar{L}>2N+1$.
\end{theorem}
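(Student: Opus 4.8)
The plan is to estimate the uniform error $\sup_{t}\|\bar H(t)-H^*(t)\|$ (and, by an identical argument, the $K$-error) through a triangle-inequality chain that inserts, between $H^*$ and the fitted $\bar H$, the intermediate objects $H(s,s)$, its Fourier coefficients $W^H(s)$, the data-driven coefficients $\hat W^H(s)$ of (\ref{VI_adp_eqn1}), and the numerical samples $\hat H_k$ of (\ref{HK_hat}). Each transition in the chain is governed by exactly one of the three free parameters $s_f$, $N$, $h$, plus the least-squares step, so the whole error is made small by choosing these parameters in the right order.

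First I would treat the model link, controlled by $s_f$. With $G=0$, Theorem \ref{VI} gives $P(s;s_f,0)\to P^*(s)$; exploiting the $T$-periodicity of the PRE coefficients to shift $(s,s_f)\mapsto(s-kT,\,s_f-kT)$ with $k=\lfloor s_f/T\rfloor$ turns this into $\sup_{s\in[0,s_{\bar L}]}\|P(s;s_f,0)-P^*(s)\|\to0$ on the fixed compact window $[0,s_{\bar L}]$, whence by (\ref{VI_variables_def}) and (\ref{HK_star}) the sample errors $\|H(s_k,s_k)-H^*(s_k)\|$ (and the $K$-analogue) are uniformly small for large $s_f$. Next come the $N$-links: Lemma \ref{VI_lemma_error_converge} makes $e_N^{H},e_N^{K}$ vanish uniformly, so $W^H(s)F_N(s)$ reproduces $\mathrm{vecs}(H(s,s))$, and Lemma \ref{VI_lemma_param_converge} makes $\hat W^H,\hat W^K$ approximate $W^H,W^K$ uniformly on $[0,s_f]$. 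Finally the $h$-link: since the right-hand side $\mathcal H$ of (\ref{VI_adp_eqn1}) is a polynomial in $\hat W$ with coefficients continuous and bounded on $[0,s_f]$, any convergent solver yields $\hat W_k\to\hat W(s_k)$, hence $\hat H_k\to H(s_k,s_k)$, uniformly in $k$ as $h\to0$.

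For the fitting link I would compare $\bar W^H=(\mathcal U^\dagger\mathcal V)^T$ with the order-$N$ Fourier block $W_N^{H*}$ of $H^*$, writing $\mathrm{vecs}(H^*(t))=W_N^{H*}F_N(t)+\tilde e_N(t)$. Since Assumption \ref{rank_U} forces $\mathcal U$ to have full column rank, $\mathcal U^\dagger\mathcal U=I$ and $\|\mathcal U^\dagger\|\le(\bar L\alpha)^{-1/2}$, so that $\bar W^H-W_N^{H*}=(\mathcal U^\dagger(\mathcal V-\mathcal V^*))^T+(\mathcal U^\dagger\tilde E)^T$, where the rows of $\mathcal V^*$ and $\tilde E$ are $\mathrm{vecs}(H^*(s_k))^T$ and $\tilde e_N(s_k)^T$; both terms are then controlled by the sample errors collected above. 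Reconstructing, $\|\bar H(t)-H^*(t)\|\le\|\bar W^H-W_N^{H*}\|\,|F_N(t)|+|\tilde e_N(t)|$, where $|F_N(t)|=\sqrt{N+1}$ and $\sup_t|\tilde e_N|\to0$ by Lemma \ref{FourierConverge}.

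The hard part is precisely this last reconstruction: the factor $|F_N(t)|=\sqrt{N+1}$ amplifies every coefficient-level error, so to keep the bound below $\epsilon$ for all $N>\bar N$ each such error must decay faster than $1/\sqrt N$. I would close this using $C^1$ regularity: under Assumption \ref{VI_assum_differentiable} the PRE forces $P^*$, and hence $H^*$ (and, through the piecewise-$C^1$ factor $R^{-1}$, $K^*$), to be $C^1$ in $t$, so Parseval applied to the derivative together with Cauchy--Schwarz yields $\sup_t|\tilde e_N|=o(1/\sqrt N)$, and the same regularity gives the same rate for $e_N^H,e_N^K$ and the forcing $\mathcal G$ underlying Lemma \ref{VI_lemma_param_converge}; equivalently, choosing the window so that $s_{\bar L}$ is near a multiple of $T$ makes $\tilde e_N$ nearly orthogonal to $\mathrm{span}\,F_N$ in the discrete inner product, so that $\mathcal U^\dagger\tilde E$ nearly cancels. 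Two secondary points need care: Theorem \ref{VI}'s convergence must be made uniform over the residual shift $r\in[0,T)$, and the step threshold $\bar h$ must be taken to work for all $N>\bar N$ despite the $N$-dependence of the Lipschitz constant of $\mathcal H$. With these in hand one fixes $\epsilon$, then chooses $\bar N$ (and the matching $\bar L$), then $\bar s_f$, then $\bar h$, in that order; the argument for $\bar K$ is identical.
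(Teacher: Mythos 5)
Your proposal follows the paper's proof step for step: the same triangle-inequality chain $H^*\rightarrow H(s,s)\rightarrow W^{H}F_N\rightarrow \hat{W}^{H}F_N\rightarrow \hat{H}_k\rightarrow \bar{H}$, with the links controlled exactly as in the paper by Theorem \ref{VI} (via $s_f$), Lemmas \ref{VI_lemma_error_converge} and \ref{VI_lemma_param_converge} (via $N$), a convergent ODE solver (via $h$), and then the least-squares comparison of $\bar{W}^{H}$ with the order-$N$ Fourier block of $H^*$ using Assumption \ref{rank_U} to bound $\Vert\mathcal{U}^\dagger\Vert_2$. Up to that point you and the paper coincide, and your two ``secondary points'' (making Theorem \ref{VI} uniform over the phase within a period, and the $N$-dependence of the admissible step size $\bar{h}$) are legitimate refinements of details the paper passes over silently.

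The divergence is in the final reconstruction step, and there your repair has a concrete gap. You are right that the paper's appeal to ``the boundedness of $F_N(\cdot)$'' hides an $N$-dependence, since $\vert F_N(t)\vert=\sqrt{N+1}$ while the conclusion must hold for \emph{every} $N>\bar{N}$; spotting this is to your credit. But your fix only covers part of the error. The $C^1$/Parseval argument gives $o(1/\sqrt{N})$ decay for quantities that are Fourier \emph{truncation} errors --- $\tilde{e}_N$, $e_N^{H}$, $e_N^{K}$, and through $\mathcal{G}$ the discrepancy in Lemma \ref{VI_lemma_param_converge}. It says nothing about the two remaining contributions to $\mathcal{V}-\mathcal{V}^*$: the model error $H(s_k,s_k)-H^*(s_k)$, which by Theorem \ref{VI} is controlled by $s_f$ alone, and the solver error, controlled by $h$. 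Neither decays as $N\rightarrow\infty$, so in your own bound $\sup_t\Vert\bar{H}(t)-H^*(t)\Vert\leq\sqrt{N+1}\,\Vert\bar{W}^{H}-W^*_1\Vert+\sup_t\vert\tilde{e}_N(t)\vert$ their amplified contributions, of order $\sqrt{N+1}\,\delta(s_f)$ and $\sqrt{N+1}\,\eta(h)$, are unbounded over $N>\bar{N}$ for any fixed $\bar{s}_f,\bar{h}$. Your proposed ordering of choices (first $\bar{N}$, then $\bar{s}_f$, then $\bar{h}$) cannot help, because the quantifier is ``for all $N>\bar{N}$,'' not ``for some $N$''; and the fallback claim that taking $s_{\bar{L}}$ near a multiple of $T$ makes the residual ``nearly orthogonal'' to $\mathrm{span}\,F_N$ so that $\mathcal{U}^\dagger\tilde{E}$ ``nearly cancels'' is asserted rather than proved, and in any case concerns $\tilde{E}$, not the model and solver terms. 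What would actually close your argument is a proof that the least-squares fit of the \emph{smooth, slowly varying} function $s\mapsto H(s,s)-H^*(s)$ (whose $s$-derivative is also $O(\delta(s_f))$, by the PRE) yields a trigonometric polynomial of sup-norm $O(\delta(s_f))$ with no $\sqrt{N}$-amplification; neither your proposal nor the paper --- which simply does not acknowledge the issue --- establishes such a statement.
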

\begin{proof}
See the Appendix.
\end{proof}
To sum up, our novel VI-based off-policy ADP algorithm is presented in Algorithm \ref{VI_Alg_off}.

\begin{algorithm2e}
 \begin{minipage}{.9\linewidth}
Choose $\Delta t>0$, large enough $M>0$, $N>0$, $s_f>0$, and small enough $h>0$.\newline
Apply $u_0(\cdot): [0,t_M]\rightarrow\mathbb{R}^{m}$ (with exploration noise) to system (\ref{periodSYS}), collect the input/state data.\newline
Construct the data matrices $\Theta$ and $\Gamma_{\tilde{x}}$.\newline
Solve numerically (\ref{VI_adp_eqn1}) backward in time on $[0,s_f]$. \newline
Choose $\bar{L}$ satisfying $s_{\bar{L}}>T$, $\lfloor L/2\rfloor>\bar{L}>2N+1$.\newline
Solve (\ref{HK_bar}) for $\bar{K}(t).$ \newline
Use $\bar{u}(t) = -\bar{K}(t)x(t)$ as the approximate optimal control for all $t\in[0,\infty)$.
\caption{VI-based off-policy ADP} \label{VI_Alg_off}
 \end{minipage}
\end{algorithm2e}
\begin{figure}[!ht]
\centerline{\includegraphics[width=\columnwidth]{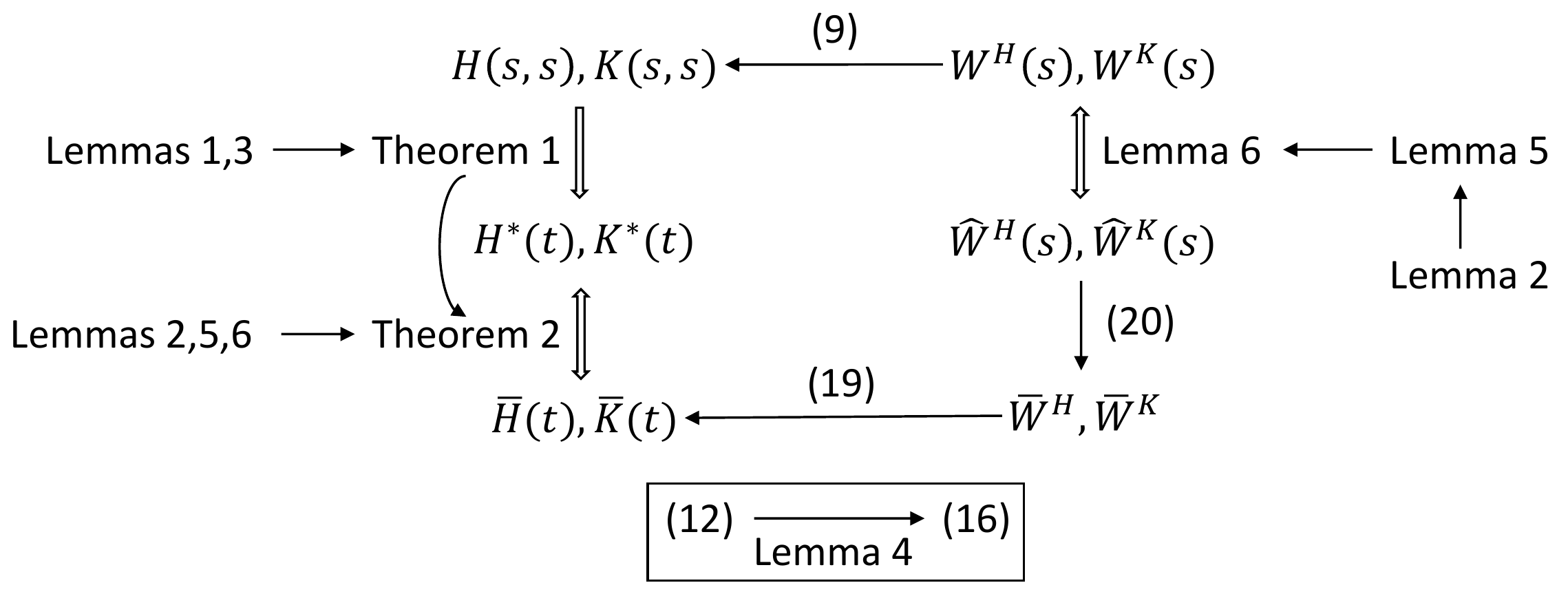}}
\caption{Overview of derivations and convergence analysis of Algorithm \ref{VI_Alg_off}.}
\label{Overview}
\end{figure}
To give a clearer explanation of the derivations and convergence analysis of Algorithm \ref{VI_Alg_off}, the relationships of different components in this section are summarized in Figure \ref{Overview}.
\begin{remark}\label{Choose_parameter}
In (\ref{VI_variables_def}) and (\ref{VI_Fourier_series}), if $\lim_{s\rightarrow-\infty}P(s) - P^*(s) = 0$, $W(s)$ will converge to a unique periodic orbit. In view of Lemma \ref{VI_lemma_param_converge}, $\hat{W}(s)$ will be close to that periodic orbit. This suggests a heuristic procedure for the choice of parameters in Algorithm \ref{VI_Alg_off}:
\begin{enumerate}[label=(\arabic*),ref=(\arabic*)]
    \item Choose some $\Delta t>0$, $N>0$.
    \item\label{Choice_M} In view of Assumption \ref{VI_assum}, $\Theta$ should be at least full column rank. Thus $M\gg (n_1+n_2)(2N+1)$. Then $t_M = M\Delta t$.
    \item Find an $s_f$, such that $\Vert\hat{W}(s)\Vert$ given by (\ref{VI_adp_eqn1}) is almost periodic near $s=0$. If such an $s_f$ can not be found for large values of $s_f$, increase $N$ and go back to step \ref{Choice_M}.
    \item Set $h>0$ such that $\lfloor \frac{s_f}{h}\rfloor\gg 2N+1$.
    \item Choose $\bar{L}$ satisfying $s_{\bar{L}}>T$, $\lfloor L/2\rfloor>\bar{L}>2N+1$.
\end{enumerate}
\end{remark}

\section{Simulation Results}\label{section_Sim}
Consider the triple inverted pendulum with a periodically varying load \cite{pendulum}, modeled by system (\ref{periodSYS})  with
\begin{align}
    &A(t) = \left[\begin{array}{cc}
        0_3 & I_3 \\
        A_{21}(t) & A_{22}
    \end{array}\right],
    B(t) = \left[\begin{array}{c}
         0_3  \\
         B_2 
    \end{array}\right]\label{triple} \\
    &A_{21}(t) = \left[\begin{array}{ccc}
        (\gamma-3) & (3-\gamma) & -1 \\
        (4-\gamma) & 2(\gamma-3) & (3-\gamma)\\
        -1 & (4-\gamma) & (\gamma-3) 
    \end{array}\right], \nonumber
\end{align}
\begin{align*}
    &A_{22} = 0.5\left[\begin{array}{ccc}
        -1 & 0 & 0 \\
        1 & -1 & 0 \\
        0 & 1 & -1
    \end{array}\right],
    B_2 = \left[\begin{array}{ccc}
        1 & -1 & 0 \\
        -1 & 2 & -1 \\
        0 & -1 & 2
    \end{array}\right],\nonumber
\end{align*}
and states $x(t)=\left[\eta_1(t),\eta_2(t),\eta_3(t),\dot{\eta}_1(t),\dot{\eta}_2(t),\dot{\eta}_3(t)\right]^T$, inputs $u(t)=[u_1(t),u_2(t),u_3(t)]^T$, where $\gamma=1+2cos(t)$. For each $i=1,2,3$, $\eta_i(\cdot)$ is the angle of the $i$th pendulum with respect to the vertical line; $\dot{\eta}_i(\cdot)$ is the corresponding angular velocity; $u_i(\cdot)$ is the control torque applied at the bottom of the $i$th pendulum.

If the system matrices in (\ref{triple}) are known, (\ref{PRE_s}) can be solved to obtain a suboptimal controller, which is referred as model-based PLQ (MBPLQ) controller. However, the actual system need not evolve exactly as (\ref{triple}). Suppose an extra periodically varying disturbance exists, which changes $A_{21}(t)$ in (\ref{triple}) to 
$$\tilde{A}_{21}(t)=A_{21}(t) + \zeta(1+sin(3t))I_3,$$
where $\zeta>0$ controls the magnitude of the disturbance. Algorithm 1 is applied to (\ref{triple}) with $\tilde{A}_{21}(t)$, to obtain ADP controllers. The simulation results with different choices of controllers, $N$, $M$, $s_f$ and $\zeta$ are summarized in Table \ref{table_simulation}. It is easy to check that Assumptions \ref{structure_assum} and \ref{VI_assum_differentiable} are satisfied by  system (\ref{triple}) and the choice of $C(\cdot)$ in Table \ref{table_simulation}. In the simulation trials of ADP controllers, the following initial controller
$$[u_0(t)]_i = 0.2*\sum_{j=1}^{500} \sin{(\omega_{i,j}t)},\quad i=1,2,3$$
is applied to the system over time interval $[0,t_M]$ to collect data, where $\omega_{i,j}$ is drawn from a uniform distribution over $[-500,500]$. To guarantee the collected data is bounded, we reset $x(t_j)$ to the initial state $x(t_0)=0$, $t_0=0$ as long as $\vert x(t_j)\vert>10$. In Table \ref{table_simulation}, trial 1 yields the best result. Loss of stability and/or optimality occurs when one of the parameters $N$, $M$, $s_f$ is not large enough, as shown in trials 2-6. Although MBPLQ controller is robust to extra disturbance with small magnitude in Trial 7, it is not robust to extra disturbance of large magnitude in trial 8, in contrast to the ADP controller. The different control gains generated by Algorithm \ref{VI_Alg_off} in trial 1 are compared with the optimal gains in Fig. \ref{Gains}. These simulation results demonstrate the viability of the theoretical results in the previous section.
\begin{table*}[h!]
\centering
\caption{simulation results of the triple inverted pendulum control under different settings}
\label{table_simulation}
 \begin{tabular}{| c | c | c | c | c | c | c | c | c |} 
 \hline
 Trials & Controller & $N$ & $M$ & $s_f$ & $\zeta$ & Number of resetting & Stability & $\max_t\Vert \bar{K}(t)-K^*(t)\Vert$   \\
 \hline
 1 & ADP & 6 & 800 & 40 & 1 & 24 & Yes & 0.0498 \\ 
 \cline{1-9}
 2 & ADP & 3 & 800 & 40 & 1 & 24 & Yes & 0.8784 \\
 \cline{1-9}
 3 & ADP & 1 & 800 & 40 & 1 & 25 & No & 64.9159 \\
 \cline{1-9}
 4 & ADP & 6 & 800 & 12 & 1 & 25 & Yes & 8.1861 \\
 \cline{1-9}
 5 & ADP & 6 & 800 & 8 & 1 & 24 & No & 146.4805\\
 \cline{1-9}
 6 & ADP & 6 & 400 & 40 & 1 & 13 & No & 151.3061 \\
 \cline{1-9}
 7 & MBPLQ & - & - & - & 0.1 & - & Yes & - \\
 \cline{1-9}
 8 & MBPLQ & - & - & - & 1 & - & No & - \\
 \hline
 Common parameters & \multicolumn{8}{c|}{$\Delta t=0.2$, $h=0.1$, $C = I_6$, $R=I_3$, $\bar{L}=\lfloor s_f/(3h) \rfloor$.}\\
 \hline
\end{tabular}
\end{table*}
\begin{figure}[!ht]
\centerline{\includegraphics[width=0.9\columnwidth]{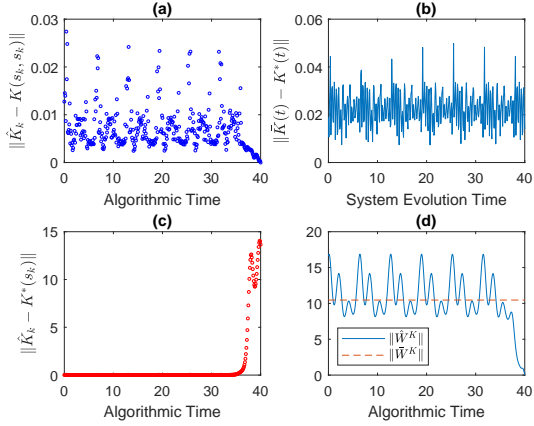}}
\caption{Comparison of different control gains. $\bar{K}(\cdot)$ is the output of Algorithm \ref{VI_Alg_off}; $\hat{K}_k$ is defined in (\ref{HK_hat}); $K(\cdot)$ is generated by model-based VI (\ref{PRE_s}); $K^*(\cdot)$ is the optimal control gain; $\hat{W}^{K}$ is generated by (\ref{VI_adp_eqn1}); $\bar{W}^{K}$ is given in (\ref{VI_final_fit}).
}
\label{Gains}
\end{figure}

\section{Conclusion}\label{section_conclusion}
An innovative VI-based ADP algorithm is proposed for CTLP systems in this paper, such that learning-based suboptimal controllers can be obtained from real-time data without the exact knowledge of system dynamics. The proposed algorithm does not assume an initial stabilizing controller, and is off-policy, which is easy-to-use and data-efficient. Convergence analysis is developed for the presented VI-based ADP algorithm. It is shown that, under mild conditions,  the proposed algorithm generates a sequence of suboptimal controllers converging uniformly to the optimal solutions. In addition, the proposed adaptive optimal control method is successfully tested in a benchmark example of controlling a triple inverted pendulum with a periodically varying load. Our future work can be directed at extending the proposed methodology to the optimal output regulation problem as shown in \cite{8100742}.

\section*{Acknowledgment}
It is a pleasure to thank the Associate Editor, anonymous reviewers and Tao Bian for their helpful and constructive comments.

\appendix
\subsection{Proof of Theorem \ref{VI}}
The proof is divided into three cases: Case 1) $G >0$; Case 2) $G=0$; Case 3) $G\geq 0$, $G\neq 0$ with at least one zero eigenvalue.

Case 1): In this case, by Lemma \ref{UniqueExist} and \cite[Corollary]{doi:10.1080/00207179208934305}, we immediately obtain
$\lim_{t\rightarrow-\infty}\left(P(t;t_f,G)-P^*(t)\right)=\lim_{t\rightarrow-\infty}\left(P(t;t_f,G)-P_S(t)\right)=0.$

Case 2): In this case, for any $t\leq \tau_1\leq \tau_2$ and any $ \xi\in \mathbb{R}^n$, define $u^{(2)}(s)\triangleq-R(s)^{-1}B(s)^TP(s;\tau_2,0)x^{(2)}(s)$, where $x^{(2)}(s), t \leq s\leq \tau_2$, is the solution of system (\ref{periodSYS}) under control $u^{(2)}(s)$. By definition,
\begin{equation*}
    \begin{split}
        \xi^TP(t;\tau_1,0)\xi&=\min_u V(t,\tau_1,\xi,u,0)\leq  V(t,\tau_1,\xi,u^{(2)},0)\\&
        \leq V(t,\tau_2,\xi,u^{(2)},0)= \xi^TP(t;\tau_2,0)\xi.
    \end{split}
\end{equation*}
Hence $P(t;\tau,0)$ is nondecreasing as $\tau\rightarrow\infty$, when $t\leq\tau$. By Lemma \ref{UniqueExist}, $P^*(t)$ is stabilizing solution of the PRE, which means $\xi^TP^*(t)\xi<\infty$, for any fixed $\xi\in \mathbb{R}^n$ and $t$. Again, by definition,
\begin{align*}
    \xi^TP(t;\tau,0)\xi&\leq V(t,\tau,\xi,u^*,0)\\
    &\leq 
    \lim\limits_{\bar{\tau}\rightarrow\infty}V(t,\bar{\tau},\xi,u^*,0)=\xi^TP^*(t)\xi<\infty.
\end{align*}
Therefore when $t\leq\tau$, $P(t;\tau,0)$ is nondecreasing as $\tau\rightarrow\infty$, and bounded from the above. By the monotone convergence theorem, $\bar{P}(t;0):=\lim_{\tau\rightarrow\infty}P(t;\tau,0)$ exists. By periodicity,
$$P(t+T;\tau+T,0)=P(t;\tau,0).$$
Then we have
$$\bar{P}(t+T;0)=\lim_{\tau\rightarrow\infty}P(t+T;\tau,0)=\lim_{\tau\rightarrow\infty}P(t;\tau,0)=\bar{P}(t;0).$$
Furthermore, $P(t;\tau,0)$ is symmetric and positive semidefinite for all $t\leq\tau$. This implies that $\bar{P}(t;0)$ is a SPPS solution of the PRE. Due to the uniqueness of the SPPS solution, we have $\bar{P}(t;0)=P^*(t)$. In $P(t;\tau,0)$, $\tau\rightarrow\infty$ is equivalent to $t\rightarrow-\infty$, thus (\ref{VImodel}) holds when $G=0$.

Case 3): In this case,we can always find a $\bar{G}>0$, such that $0\leq G\leq\bar{G}$, then (\ref{VImodel}) follows by Lemma \ref{VI_monotone} and the squeeze theorem \cite[Theorem 3.3.6]{sohrab2003basic}.
\subsection{Proof of Lemma \ref{VI_lemma_error_converge}}
For 1), by Assumption \ref{VI_assum_differentiable}, $A(t)$ is locally Lipschitz continuous at $t$, hence $A(\cdot)$ is Lipschitz continuous on compact set $[t,t+T]$. Due to the periodicity, we know that $A(\cdot)$ is Lipschitz continuous on $\mathbb{R}$. Define
$$\mathcal{F}(s,\tau)=\frac{\mathrm{vecs}(H(s,t-\tau))-\mathrm{vecs}(H(s,t))}{\sin(\omega\tau/2)}$$
for $0<\vert\tau\vert\leq \frac{T}{2}$, and put $\mathcal{F}(s,0)=0$. Then we have
$$\vert\mathcal{F}(s,\tau)\vert\leq\frac{2\Vert A(t-\tau)-A(t)\Vert\Vert P(s)\Vert}{\vert\sin(\omega\tau/2)\vert}\leq \bar{U}\left\vert\frac{\tau}{\sin(\omega\tau/2)}\right\vert,$$
where $\bar{U}>0$ does not depend on $t$, $s$ or $\tau$, since $A(\cdot)$ is Lipschitz continuous on $\mathbb{R}$ and $P(s)$ is bounded by Theorem \ref{VI}. From above inequalities, it is easy to see that $\vert\mathcal{F}(s,\tau)\vert$ is bounded on $[s',s_f]\times [-T/2,T/2]$. Following the same derivations as those in \cite[Theorem 8.14]{rudin1976principles}, we have
\begin{equation*}
    \begin{split}
        &W^{H}(s)F_N(t)-\mathrm{vecs}(H(s,t))=\\
        &\frac{1}{T}\int_{-T/2}^{T/2}\left[\mathcal{F}(s,\tau)\cos(\frac{\omega\tau}{2})\right]\sin(N\omega\tau)\mathrm{d}\tau\\
        &+\frac{1}{T}\int_{-T/2}^{T/2}\left[\mathcal{F}(s,\tau)\sin(\frac{\omega\tau}{2})\right]\cos(N\omega\tau)\mathrm{d}\tau,
    \end{split}
\end{equation*}
which converges uniformly on $[s',s_f]\times \mathbb{R}$ to $0$ as $N\rightarrow\infty$, as a result of \cite[Theorem 8.12]{rudin1976principles} and the boundedness of $\mathcal{F}(s,\tau)$. Therefore, $\lim_{N\rightarrow\infty}e_N^{H}(s,t) = 0$ uniformly on $[s',s_f]\times \mathbb{R}$. Through similar arguments, we can prove that $e_N^{K}(s,t)$, $\partial_s e_N^{H}(s,t)$, $\partial_s e_N^{K}(s,t)$ all converge uniformly to $0$ on $[s',s_f]\times \mathbb{R}$.

For 2), by definition of $\mathcal{G}(W(s),s)$, Lemma \ref{FourierConverge} and Assumption \ref{VI_assum}, for any $\epsilon_0>0$, there exists $\bar{N}_0>0$, such that $\forall N>\bar{N}_0$,
\begin{equation*}
    \begin{split}
        \sup_{s\in[s',s_f]}\vert\mathcal{G}\left(W(s),s\right)\vert&<M\Vert\Theta^\dagger\Vert_2(1+2\beta^4)\epsilon_0\\&=\frac{M(1+2\beta^4)}{\sigma_{min}(\Theta)}\epsilon_0\leq\frac{1+2\beta^4}{\alpha}\epsilon_0.
    \end{split}
\end{equation*}
Note that $\alpha$ and $\beta$ are independent of $N$, the proof is completed.

\subsection{Proof of Lemma \ref{VI_lemma_param_converge}}
Firstly, we prove that if the solution of (\ref{VI_adp_eqn1}) exists on interval $[s',s_f]$, for some $-\infty<s'<s_f$ and $\forall N>N_1$, $N_1>0$, then Lemma $\ref{VI_lemma_param_converge}$ holds on this interval. Subtracting (\ref{VI_adp_eqn1}) from (\ref{VI_data_eqn3}), we have

\begin{equation}\label{VI_lemma_param_converge_ode1}
\begin{split}
     \left[\begin{array}{c}
         \mathrm{vec}(\dot{Z}^{(1)}(s)) \\
         \mathrm{vec}(\dot{Z}^{(2)}(s))
    \end{array}\right] &= \mathcal{H}\left(W(s),s\right) \\
    &-\mathcal{H}\left(\hat{W}(s),s\right)+\mathcal{G}\left(W(s),s\right),    
\end{split}
\end{equation}
where
$$Z(s) = \left[\begin{array}{c}
     Z^{H}(s)  \\
     Z^{K}(s) 
\end{array}\right]=\left[\begin{array}{c}
     W^{H}(s)-\hat{W}^{H}(s)  \\
     W^{K}(s)-\hat{W}^{K}(s) 
\end{array}\right],$$
$Z(s_f)=0$. Consider the following differential equation evolving on $[s',s_f]$
\begin{equation}\label{VI_lemma_params_converge_ode2}
    \left[\begin{array}{c}
    \mathrm{vec}(\dot{Z}_0^{H}(s)) \\
    \mathrm{vec}(\dot{Z}_0^{K}(s))
    \end{array}\right] = \mathcal{H}\left(W_0(s),s\right)-\mathcal{H}\left(\hat{W_0}(s),s\right),
\end{equation}
with $Z_0(s_f)=0$. Obviously, it admits a solution $Z_0(\cdot)\equiv 0$.

On one hand, we know from Lemma \ref{VI_lemma_error_converge} that for any $\epsilon>0$,  there is some $N_\epsilon$, such that $\sup_{s\in[s',s_f]}\vert\mathcal{G}\left(W(s),s\right)\vert<\epsilon$. On the other hand, note that the RHS of (\ref{VI_lemma_param_converge_ode1}) and RHS of (\ref{VI_lemma_params_converge_ode2}) are locally Lipschitz in $Z(s)$ and $Z_0(s)$, respectively. Then by \cite[Theorem 55]{sontag2013mathematical}, we have 
\begin{equation}\label{Difference_inequal}
    \sup_{s\in[s',s_f]}{\Vert Z(s)\Vert}<g(\epsilon),
\end{equation}
where $g(\cdot):\mathbb{R}^+\rightarrow\mathbb{R}^+$ is a class $\mathcal{K}_\infty$ function \cite[Definition 4.2]{khalil1996noninear}. Thus, by choosing $\epsilon$ arbitrarily small, $\sup_{s\in[s',s_f]}{\Vert Z(s)\Vert}$ can also be made arbitrarily small.

Now, we prove that the solution of (\ref{VI_adp_eqn1}) indeed exists on $[0,s_f]$ for $N$ large enough. This amounts to proving that $Z(s)$ exists on $[0,s_f]$ for $N$ large enough, because $W^{H}(s)$ and $W^{K}(s)$ always exist on $(-\infty,s_f]$. For a fixed $N\in\mathbb{Z}_+$, since the RHS of (\ref{VI_lemma_param_converge_ode1}) is continuous in $s$ and locally Lipschitz at $Z(s_f)=0$, by \cite[Theorem 3.1]{khalil1996noninear}, (\ref{VI_lemma_param_converge_ode1}) has a unique solution on $[s_f-\delta, s_f]$, for some $\delta>0$. Let $(S_N, s_f]$ be the maximal interval of existence of (\ref{VI_lemma_param_converge_ode1}). If $S_N=-\infty$, then we are done. Otherwise, $\lim\limits_{s\rightarrow S_N^+}\Vert Z(s)\Vert=\infty$. In the latter case, we claim that $\lim\limits_{N\rightarrow+\infty}S_N=-\infty$. To see this, let $N_1,N_2\in\mathbb{Z}_+$ and $N_1<N_2$. Supposing $S_{N_1}<S_{N_2}$, by (\ref{Difference_inequal}) and Lemma \ref{VI_lemma_error_converge}, for (\ref{VI_lemma_param_converge_ode1}) with $N=N_2$, $\lim\limits_{s'\rightarrow S_{N_2}^+}\sup_{s\in[s',s_f]}{\Vert Z(s)\Vert}<c$, where $c>0$ is some finite constant. This contradicts to the fact that $S_{N_2}$ is a finite escape time for (\ref{VI_lemma_param_converge_ode1}) with $N=N_2$. Thus we can only have $S_{N_1}\geq S_{N_2}$, i.e., $\{S_N\}_{N=0}^\infty$ is a non-increasing sequence. Assuming $\lim\limits_{N\rightarrow\infty}S_N=S>-\infty$, we have $\lim\limits_{s\rightarrow S^+}\Vert Z(s)\Vert = \infty$. But by (\ref{Difference_inequal}) and Lemma \ref{VI_lemma_error_converge}, for any $S<s'\leq s_f$,
$$\lim\limits_{N\rightarrow\infty}\left(\sup_{s\in[s',s_f]}{\Vert Z(s)\Vert}\right)=0.$$
We arrive at another contradiction. Thus the only possibility is $\lim\limits_{N\rightarrow+\infty}S_N=-\infty$, which means the solution of (\ref{VI_adp_eqn1}) exists on $[0,s_f]$ for $N$ large enough. The proof is thus completed.

\subsection{Proof of Theorem \ref{VI_adp_converge}}
For convenience, only the convergence of $\bar{H}$ to $H^*$ is proved. The proof on the convergence of $\bar{K}$ to $K^*$ is analogous. Set $G=0$. For any $\epsilon_0>0$, by Lemma \ref{VI_lemma_error_converge}, there exists $\bar{N}_1>0$, such that $\forall N>\bar{N}_1$,
\begin{equation*}
        \sup_{s\in[0,s_f]}\Vert \mathrm{vecs}^{-1}(W^{H}(s)F_{N}(s))-H(s,s)\Vert<\frac{\epsilon_0}{4}.
\end{equation*}
By Lemma \ref{VI_lemma_param_converge}, there exists $\bar{N}_2>0$, such that $\forall N>\bar{N}_2$,
\begin{equation*}
        \sup_{s\in[0,s_f]}\vert \hat{W}^{H}(s)F_N(s)-W^{H}(s)F_N(s)\vert<\frac{\epsilon_0}{4}.
\end{equation*}
According to \cite[Theorem 213B]{butcher2016numerical}, we can choose a $\bar{h}_0$, such that for all step size $0<h<\bar{h}_0$,
\begin{equation*}
        \sup_{k\in\{1,2,\cdots,\bar{L}\}}\vert \mathrm{vecs}(\hat{H}_k)-\hat{W}^{H}(s_k)F_N(s_k)\vert<\frac{\epsilon_0}{4}.
\end{equation*}
From Theorem \ref{VI}, there exists a $\bar{s}_{f,0}>0$, such that $\forall s_f>\bar{s}_{f,0}$,
\begin{equation*}
        \sup_{s\in[0,s_{\bar{L}}]}\Vert H(s,s)-H^*(s)\Vert<\frac{\epsilon_0}{4}.
\end{equation*}
Set $\bar{N}_0 = \max\{\bar{N}_1,\bar{N}_2\}$. Applying the triangle inequality to the above inequalities and by Fact \ref{Isometric}, we obtain that, $\forall s_f>\bar{s}_{f,0}$, $N>\bar{N}_0$, any $0<h<\bar{h}_0$, there are
\begin{equation}\label{VI_adp_converge_eqn1}
    \sup_{k\in\{1,2,\cdots,\bar{L}\}}\Vert \hat{H}_k-H^*(s_k)\Vert<\epsilon_0.   
\end{equation}
Now define
\begin{equation*}
    \mathcal{V}^*=\left[
         \mathrm{vecs}(H^*(s_0)),
         \mathrm{vecs}(H^*(s_1)),
         \cdots ,
         \mathrm{vecs}(H^*(s_{\bar{L}}))
    \right]^T,
\end{equation*}
and express $H^*(s)$ by their Fourier series,
\begin{equation*}
    \mathrm{vecs}(H^*(s)) = W^*_1F_N(s)+e^*_{1,N}(s).
\end{equation*}
In view of (\ref{VI_final_fit}) and Assumption 4, we have
\begin{equation*}
    (W^*_1)^T=\mathcal{U}^\dagger(\mathcal{V^*}+E^*_{1,N}).
\end{equation*}
Thus we obtain
\begin{equation*}
    (W^*_1-\bar{W}^{H})^T=\mathcal{U}^\dagger\left[(\mathcal{V}^*-\mathcal{V})+E^*_{1,N}\right].
\end{equation*}
By the properties of matrix norms, we have
\begin{equation*}
    \begin{split}
    \frac{1}{\sqrt{n_1}}\Vert(W^*_1-\bar{W}^{H})^T\Vert&\leq \Vert(W^*_1-\bar{W}^{H})^T\Vert_2  \\
    &\leq \Vert\mathcal{U}^\dagger\Vert_2\Vert\left(\mathcal{V}^*-\mathcal{V})+E^*_{1,N}\right\Vert.
    \end{split}
\end{equation*}
By (\ref{VI_adp_converge_eqn1}), Lemma \ref{FourierConverge} and Assumption \ref{rank_U}, for any $\epsilon_3>0$, there exist large enough $s_{f,3}$, $N_3$, and small enough $h_3$, such that
\begin{equation*}
    \begin{split}
    \Vert W^*_1-\bar{W}^{H}\Vert&< \frac{\sqrt{n_1}\bar{L}}{\sigma_{\min}(\mathcal{U})}\epsilon_3< \frac{\sqrt{n_1}}{\alpha}\epsilon_3.
    \end{split}
\end{equation*}
As a result of the boundedness of $F_N(\cdot)$, for any $\epsilon>0$, there exist $\bar{s}_f>0$, $\bar{N}>0$, $\bar{h}>0$, such that $\forall s_f>\bar{s}_f$, $\forall N>\bar{N}$, any $0<h<\bar{h}$,
\begin{align*}
    &\sup_{t\in\mathbb{R}}\Vert\bar{H}(t)-\mathrm{vecs}^{-1}(W^*_1F_N(t))\Vert<\frac{\epsilon}{2},\\
    &\sup_{t\in\mathbb{R}}\Vert \mathrm{vecs}^{-1}(W^*_1F_N(t))-H^*(t)\Vert<\frac{\epsilon}{2}.
\end{align*}
Again, using the triangle inequality completes the proof.


\bibliographystyle{IEEEtran} 
\bibliography{bpangRef} 

\end{document}